\newtheorem{observation}{Observation}
\def\def\IPEfile{#}\input{#}1{\def\IPEfile{#1}\input{#1}}
\newcommand{\Reals}{{\mathbb{R}}}            
\newcommand{\eps}{\varepsilon}               
\newcommand{\NP}{{\normalfont NP}}
\newcommand{\forloop}[5][1]%
\newcommand{\pstart}{s}
\newcommand{\pend}{e}
\newcommand{\opdist}{\overline{p}}
\newcommand{\cdist}{c}
\newcommand{\decvar}{\lambda}
\def\O{{\cal O}}
\def\W{{\cal W}}
\newcommand{\doap}{{\sc doap}}
\newcommand{\bcmd}{{\sc doap}\xspace}
\begin{document}

\title{Fast Algorithms for Diameter-Optimally Augmenting Paths and
Trees\thanks{A preliminary version of this paper appeared in the
Proceedings of the 42nd International Colloquium on Automata, Languages,
and Programming (ICALP), Part I, Lecture Notes in Computer Science,
Vol.\ 9134, Springer-Verlag, Berlin, 2015, pp.\ 678--688.
M.S.\ was supported by NSERC. J.G. was supported by the ARC’s Discovery Projects funding scheme (DP150101134).} }
\author{Ulrike Gro{\ss}e\inst{1} \and Joachim Gudmundsson\inst{2}
\and Christian Knauer\inst{1} \and Michiel Smid\inst{3} \and Fabian Stehn\inst{1}}

\institute{Institut f\"ur Angewandte Informatik, Universit\"at Bayreuth,  Bayreuth, Germany
\and
School of Information Technology, University of Sydney, Sydney, Australia
\and
School of Computer Science, Carleton University, Ottawa, Canada
}

\maketitle

\begin{abstract}
We consider the problem of augmenting an $n$-vertex graph embedded
in a metric space, by inserting one additional edge in order to
minimize the diameter of the resulting graph. We present exact
algorithms for the cases when (i) the input graph is a path, running in
$O(n \log^3 n)$ time, and (ii) the input graph is a tree, running in
$O(n^2 \log n)$ time. We also present an algorithm that computes a
$(1+\eps)$-approximation in $O(n + 1/\eps^3)$ time, for paths in
$\Reals^{d}$, where $d$ is a constant.
\end{abstract}


\section{Introduction}
Let $G=(V,E)$ be a graph in which each edge has a positive weight.
The weight (or length) of a path is the sum of the weights of the edges
on this path. For any two vertices $x$ and $y$ in $V$, we denote by
$\delta_G(x,y)$ their shortest-path distance, i.e., the minimum weight
of any path in $G$ between $x$ and $y$. The diameter of $G$ is defined
as $\max \{ \delta_G(x,y) : x,y \in V \}$.

Assume that we are also given weights for the non-edges of the graph $G$.
In the \emph{Diameter-Optimal $k$-Augmentation Problem}, \doap$(k)$, we have to
compute a set $F$ of $k$ edges in $(V \times V) \setminus E$ for which
the diameter of the graph $(V,E\cup F)$ is minimum.


In this paper, we assume that the given graph is a path or a tree
on $n$ vertices that is embedded in a metric space, and the weight of
any edge and non-edge is equal to the distance between its vertices. We
consider the case when $k=1$; thus, we want to compute one non-edge
which, when added to the graph, results in an augmented graph of
minimum diameter. Surprisingly, no non-trivial results were known even
for the restricted cases of paths and trees.

Throughout the rest of the paper, we assume that $(V,|\cdot|)$ is a
metric space, consisting of a set $V$ of $n$ elements (called points
or vertices). The distance between any two points $x$ and $y$ is
denoted by $|xy|$. We assume that an oracle is available that returns
the distance between any pair of points in $O(1)$ time. Our contributions
are as follows:
\begin{enumerate}
\item If $G$ is a path, we solve problem \doap$(1)$ in $O(n \log^3 n)$
      time.
\item If $G$ is a path and the metric space is $\Reals^d$, where $d$ is
      a constant, we compute a $(1+\eps)$-approximation for \doap$(1)$ in
      $O(n + 1/\eps^3)$ time.
\item If $G$ is a tree, we solve problem \doap$(1)$ in $O(n^2 \log n)$
      time.
\end{enumerate}

\subsection{Related Work}

The Diameter-Optimal $k$-Augmentation Problem for edge-weighted graphs,
and many of its variants, have been shown to be
NP-hard~\cite{sbl-diced-97}, or even
$W[2]$-hard~\cite{fggm-agmd-13,ghn-pcgda-13}. Because of this, several
special classes of graphs have been considered.
Chung and Gary~\cite{cgdbag-84} and Alon et al.~\cite{agr-ddbdg-99}
considered paths and cycles with unit edge weights and gave upper and
lower bounds on the diameter that can be achieved.
Ishii~\cite{i-aogmd-13} gave a constant factor approximation algorithm
(approximating both $k$ and the diameter) for the case when the input
graph is outerplanar. Erd\H{o}s et al.~\cite{egr-hddtf-98} investigated
upper and lower bounds for the case when the augmented graph must be
triangle-free.

\paragraph{The general problem:}
The Diameter-Optimal Augmentation Problem can be seen as a bicriteria
optimization problem: In addition to the weight, each edge and non-edge
has a cost associated with it. Then the two optimization criteria are
(1) the total cost of the edges added to the graph and (2) the diameter
of the augmented graph. We say that an algorithm is an
$(\alpha,\beta)$-approximation algorithm for the \bcmd problem, with
$\alpha, \beta \geq 1$, if it computes a set $F$ of non-edges of total
cost at most $\alpha\cdot B$ such that the diameter of $G'=(V,E\cup F)$
is at most $\beta\cdot D^B_\mathrm{opt}$, where $D^B_\mathrm{opt}$ is the diameter of
an optimal solution that augments the graph with edges of total cost at
most $B$.

For the restricted version when all costs and all weights are
identical~\cite{bgp-ianar-12,cv-atmcd-02,dk-dnbpd-99,ks-bdmcg-07,lms-mcbdb-92},
Bil\`{o} et al.~\cite{bgp-ianar-12} showed that, unless P=\NP, there
does not exist a $(c \log n, \delta<1+1/D^B_\mathrm{opt})$-approximation
algorithm for \bcmd if $D^B_\mathrm{opt}\geq 2$. For the case in which
$D^B_\mathrm{opt}\geq 6$, they proved that, again unless P=\NP, there does not
exist a $(c \log n, \delta<\frac {5}{3}-\frac {7-(D^B_\mathrm{opt}+1) \bmod 3} {3D^B_\mathrm{opt}})$-approximation algorithm.

Li et al.~\cite{lms-mcbdb-92} showed a $(1,4+ 2/D^B_\mathrm{opt})$-approximation
algorithm. The analysis of the algorithm was later improved by Bil\`{o}
et al.~\cite{bgp-ianar-12}, who showed that it gives a
$(1,2+2/D^B_\mathrm{opt})$-approximation. In the same paper they also gave an
 $(O(\log n), 1)$-approximation algorithm.

For general costs and weights,
Dodis and Khanna~\cite{dk-dnbpd-99} gave an
$O(n \log D^B_\mathrm{opt}, 1)$-approximation algorithm.
Their result is based on a
multi-commodity flow formulation of the problem.
Frati et al.~\cite{fggm-agmd-13} recently considered the \bcmd problem
with arbitrary integer costs and weights. Their main result is a
$(1,4)$-approximation algorithm with running time
$O((3^B B^3 + n + \log (Bn)) B n^2)$.

\paragraph{Geometric graphs:}
In the geometric setting, when the input is a geometric graph embedded
in the Euclidean plane, there are only a few results on graph
augmentation in general. Rutter and Wolff~\cite{rw-acpgg-12} proved
that the $k$-connectivity and $k$-edge-connectivity augmentation problems
are NP-hard on plane geometric graphs, for $k = 2, 3, 4$, and $5$; the
problem is infeasible for $k\geq 6$ because every planar graph has a
vertex of degree at most 5. Currently, there are no known approximation
algorithms for this problem. Farshi et al.~\cite{fgg-isfgn-05} gave
approximation algorithms for the problem of adding one edge to a
geometric graph while minimizing the dilation. There were several
follow-up papers~\cite{lw-cbwsg-08,w-cdeag-10}, but there is still
no non-trivial result known for the case when $k > 1$.

In the \emph{continuous} version of the diameter-optimal augmentation
problem, the input graph $G$ is embedded in the plane and the edges
to be added to $G$ can have their endpoints anywhere \emph{on} $G$,
i.e., the endpoints can be in the interior of edges of $G$.
Moreover, the diameter is considered as the maximum of the shortest-path
distances over all points \emph{on} $G$.
Yang~\cite{ecats} considered the continuous version of the problem of
adding one edge to a path so as to minimize the continuous diameter.
He presented sufficient and necessary conditions for an augmenting
edge to be optimal. He also presented an approximation algorithm,
having an \emph{additive} error of $\epsilon$, that runs in
$O(( n + |P|/\epsilon )^2 n )$ time, where $|P|$ denotes the length
of the input path $P$ and $\epsilon$ is at most half of the length of a
shortest edge in $P$. De Carufel et al.~\cite{mcdwapc} improved the
running time to $O(n)$ and also considered the continuous version
of the problem for cycles that are embedded in the plane. They showed
that adding one edge to any cycle does not decrease the continuous
diameter. On the other hand, two edges can always be added that decrease
the continuous diameter. De Carufel et al.\ gave a full characterization
of the optimal two edges. If the input cycle is convex, they find the
optimal pairs of edges in $O(n)$ time.

\section{Augmenting a Path with One Edge} \label{sec:diam-path-exact}
We are given a path $P = (p_1, \ldots,p_n)$ on $n$ vertices in a
metric space and assume that it is stored in an array $P[1,\dots,n]$. To
simplify notation, we associate a vertex with its index, that is
$p_k=P[k]$ is also referred to as $k$ for $1\leq k\leq n$. This allows
us to extend the total order of the indices to the vertex set of $P$.
We denote the start vertex of $P$ by $\pstart$ and the end vertex of $P$
by $\pend$.

For $1 \le k < l \le n$, we denote the subpath $(p_k, \dots, p_l)$ of
$P$ by $P[k,l]$, the cycle we get by adding the edge $\overline{p_kp_l}$
to $P[k,l]$ by $C[k,l]$, and the (unicyclic) graph we get by adding the
edge $\overline{p_kp_l}$ as a {\em shortcut} to $P$ by
$\overline{P}[k,l]$; the length of $X\in \{P,P[k,l],C[k,l]\}$ is denoted
by~$|X|$. We will consider the functions
$\opdist_{k,l} := \delta_{\overline{P}[k,l]}$ and
$\cdist_{k,l} := \delta_{C[k,l]}$, where $\delta_G$ is the
length of the shortest path between two vertices in $G$.
For $1 \le k < l \le n$, we let
\[ M(k,l) := \max_{1 \le x < y \le n} \opdist_{k,l}(x,y)\]
denote the {\em diameter} of the graph $\overline{P}[k,l]$.

Our goal is to compute a shortcut $\overline{p_kp_l}$ for $P$ that
minimizes the diameter of the resulting unicyclic graph, i.e., we want
to compute
\[ m(P) := \min_{1 \le k < l \le n} M(k,l).
\]
We will prove the following result:

\begin{theorem}  \label{thm:main}
Given a path $P$ on $n$ vertices in a metric space, we can compute
$m(P)$, and a shortcut realizing that diameter, in $O(n \log^3 n)$ time.
\end{theorem}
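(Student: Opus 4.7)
The approach is to reduce Theorem~\ref{thm:main} to a \emph{decision version}: given a positive real $\lambda$, decide whether there exist indices $1\le k<l\le n$ with $M(k,l)\le\lambda$. Suppose this decision problem can be solved in time $T(n)$. Then $m(P)$ can be obtained via $O(\log n)$ invocations of the decision routine using binary search on an implicit set of $O(n^2)$ candidate values for the optimum (the values take the form $|P[i,j]|+|p_kp_l|+|P[i',j']|$, drawn from sorted matrices, so a selection step avoids materialising them). With $T(n)=O(n\log^2 n)$, this yields the claimed $O(n\log^3 n)$ bound.

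I would next analyse the structure of $M(k,l)$ by case analysis on where the diameter-achieving pair $(x,y)$ lies relative to the cycle $C[k,l]$. When both endpoints lie in the same tail, the distance is a pure subpath length; when both lie on the cycle, it equals $\cdist_{k,l}(x,y)=\min(|P[x,y]|,\,|P[k,x]|+|p_kp_l|+|P[y,l]|)$; when they lie on opposite tails (or in a mixed tail/cycle position), the distance combines tail segments with the cycle distance from $p_k$ to $p_l$. Consequently, the predicate $M(k,l)\le\lambda$ is the conjunction of a constant number of conditions, one per case, each of which can be checked separately.

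To implement the decision procedure in $O(n\log^2 n)$ time, I would fix $k$ and sweep $l$ from $k+1$ to $n$. After $O(n)$ preprocessing of prefix sums along $P$, the tail conditions $|P[1,k]|\le\lambda$ and $|P[l,n]|\le\lambda$ restrict $l$ to a contiguous interval. The cross-tail condition reduces, via pairwise distance bounds, to a simple test involving $|p_kp_l|$. The cycle condition, that no pair $(x,y)$ with $k\le x<y\le l$ has $\cdist_{k,l}(x,y)>\lambda$, is the delicate one: the two branches inside the $\min$ are individually monotone in $l$ but their pointwise minimum is not. My plan is to maintain a balanced search tree keyed by the vertices in $[k,l]$, storing witness values that linearise each branch as an affine function of $|p_kp_l|$ and $|P[k,l]|$; worst-pair queries then become range maxima, and binary searching on $l$ with $O(\log n)$ per probe yields $O(\log^2 n)$ per $k$, hence $O(n\log^2 n)$ overall.

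The principal obstacle is the cycle case, because the $\min$ in the definition of $\cdist_{k,l}$ creates a non-monotone dependence on $l$ that defeats naive sweeps. Overcoming this requires partitioning the potential witnesses $(x,y)$ according to which branch of the $\min$ is tight and updating the data structure as this partition changes during the sweep; a careful amortisation should show that each vertex migrates between the two classes only a few times per fixed $k$, keeping the total update cost within budget. Combined with the outer binary search and standard witness-tracking to recover an optimal shortcut, this yields the $O(n\log^3 n)$ bound stated in Theorem~\ref{thm:main}.
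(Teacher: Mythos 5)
Your high-level plan (reduce to a decision problem, exploit monotonicity of the tail conditions to confine $l$ to an interval, and isolate the cycle condition as the hard part) matches the paper's, but both of the steps you defer to ``should work'' arguments are exactly where the paper has to do real work, and as written neither goes through. First, the optimization step: the candidate values for $m(P)$ are not $O(n^2)$ in number. A diameter-realizing distance in $\overline{P}[k,l]$ that uses the shortcut has the form $\delta_P(x,k)+|p_kp_l|+\delta_P(l,y)$, which depends on four free indices, so the natural candidate set has size $\Theta(n^4)$; organizing it as $\Theta(n^2)$ sorted matrices (one per pair $(k,l)$) still forces any selection routine to touch $\Omega(n^2)$ entries, which already exceeds the target bound. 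The paper sidesteps candidate enumeration entirely by invoking Megiddo's parametric search, which requires, besides the $O(n\log n)$ sequential decision algorithm, a \emph{parallel} decision algorithm running in $O(\log n)$ time on $n$ processors (Lemma~\ref{lem:seqdecalg}~b)); this parallel version is an essential ingredient that your proposal omits, and it is precisely what yields $O(n\log^3 n)$.

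Second, the cycle condition. Your sweep relies on the claim that, for fixed $k$, each witness migrates between the two branches of the $\min$ in $\cdist_{k,l}$ only a few times as $l$ grows. In a general metric space $|p_kp_l|$ is not monotone in $l$, so for a fixed pair $(x,y)$ the sign of $\cdist^+_{k,l}(x,y)-\cdist^-_{k,l}(x,y)$ can change many times, and the amortization you hope for has no basis. The paper avoids the sweep altogether with two observations you are missing: (i) $O(k,l)$ is nondecreasing in $l$, so for each $k$ only the single value $l_k=\min N_k$ ever needs to be tested; and (ii) for that one test, any violating pair must have $x\in K'$ and $y\in L'$, and for each such $x$ it suffices to test the single partner $x'$ (the first vertex beyond distance $\decvar$ from $x$ along $P$), which collapses the whole check to the inequality $\min_{x\in K'}\Delta(x)\ge |C[k,l]|-2\decvar$, answerable by one range-minimum query on a precomputed range tree. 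Without (i), (ii), and the parallel decision algorithm, the $O(n\log^3 n)$ bound does not follow from your outline.
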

The algorithm consists of two parts. We first describe a sequential
algorithm for the {\em decision problem}. Given $P$ and
a threshold parameter $\decvar>0$, decide if
$m(P) \le \decvar$ (see Lemma~\ref{lem:seqdecalg}~a) below). In a second
step, we argue that the sequential algorithm can be implemented in a
parallel fashion (see Lemma~\ref{lem:seqdecalg}~b) below), thus enabling
us to use the parametric search paradigm of Megiddo.

\begin{lemma}
\label{lem:seqdecalg}
Given a path $P$ on $n$ vertices in a metric space
and a real parameter $\decvar>0$, we
can decide in
\begin{enumerate*}
	\item[\textbf{a)}] $O(n \log n)$ time, or in
	\item[\textbf{b)}] $O(\log n)$ parallel time using $n$ processors
\end{enumerate*}
 whether $m(P)\le \decvar$; the algorithms
also produce a feasible shortcut if it exists.
\end{lemma}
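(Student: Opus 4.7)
The plan is to build a decision procedure that checks, for each of the $n$ choices of the left shortcut endpoint $k$, in $O(\log n)$ time whether some $l > k$ gives $M(k,l) \le \decvar$. Summed over $k$ this gives $O(n \log n)$ sequentially, and assigning one processor per $k$ gives $O(\log n)$ parallel time on $n$ processors using shared read-only arrays. After $O(n)$ preprocessing (or $O(\log n)$ parallel prefix sums) we store $\sigma_i := |P[1,i]|$ and $\tau_i := |P[i,n]|$, reducing every subpath length to $O(1)$ arithmetic; each $|p_a p_b|$ is still a single oracle call.

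The structural step we will use is the identity $M(k,l) = \max(D_1,D_2,D_3,D_4)$, obtained by projecting any pair $(u,v)$ onto the cycle $C[k,l]$ via $\pi(u) := \max(k,\min(l,u))$, observing the decomposition $\opdist_{k,l}(u,v) = |P[u,\pi(u)]| + \cdist_{k,l}(\pi(u),\pi(v)) + |P[\pi(v),v]|$, and maximising over the worst tail endpoint on each side. Here $D_1 := \sigma_k + \cdist_{k,l}(k,l) + \tau_l$ is the $(p_1,p_n)$ distance, $D_2 := \sigma_k + \max_{j\in[k,l]} \cdist_{k,l}(k,j)$ is the eccentricity of $p_1$, $D_3$ is the symmetric quantity at $p_n$, and $D_4$ is the cycle diameter. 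The inner maxima in $D_2$ and $D_3$ are attained at the cycle vertex (almost) antipodal to the reference, so each is evaluated by a single binary search on $\sigma$ for the path-index closest to cycle-position $L/2$, where $L := |P[k,l]| + |p_k p_l|$. Consequently any single pair $(k,l)$ can be tested in $O(\log n)$ time.

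For fixed $k$, the plan is then to binary-search over $l\in\{k+1,\ldots,n\}$ exploiting the monotone behaviour of $\sigma_k$ (constant), $\tau_l$ (non-increasing), and $|P[k,l]|$ (non-decreasing), and using $O(1)$ oracle probes per candidate $l$ to handle the non-monotone dependence on $|p_k p_l|$. We will argue that the feasible set of $l$ for each monotone-controlled constraint ($D_2\le\decvar$, $D_3\le\decvar$) is an interval whose boundary is located by an $O(\log n)$-step binary search, and that inside the resulting candidate window the remaining constraints $D_1\le\decvar$ and $D_4\le\decvar$ can be discharged either monotonically or by a further $O(\log n)$-probe verification.

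The hard part is the non-monotone dependence of $D_1$ and $D_4$ on the metric value $|p_k p_l|$, which prevents a direct monotonicity argument for these two constraints. We will overcome this by showing that if $(k,l)$ satisfies $D_2\le\decvar$ and $D_3\le\decvar$ then $|p_k p_l|$ is already bounded by $2\decvar - 2\sigma_k - |P[k,l]|$ (up to a discrete-antipode correction), so the remaining oracle probes can be arranged into an $O(\log n)$-step binary search that certifies or refutes the pair. In the parallel setting each $k$ runs its own $O(\log n)$-depth search over the shared arrays $\sigma$ and $\tau$ using its own oracle calls, and a concluding $O(\log n)$-depth parallel reduction over $k$ outputs a feasible shortcut $(k,l)$ whenever $m(P)\le\decvar$.
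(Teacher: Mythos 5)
Your high-level skeleton matches the paper's: decompose $M(k,l)$ into four quantities (your $D_1,\dots,D_4$ are the paper's $U,S,E,O$), exploit their monotonicity in $l$ to locate, for each fixed $k$, an interval of candidate values of $l$ by binary search, and parallelize over $k$. The genuine gap is in the treatment of $D_4$, the diameter $O(k,l)$ of the cycle $C[k,l]$. This is a maximum over $\Theta(n^2)$ pairs of cycle vertices, and deciding $O(k,l)\le\decvar$ for even a single pair $(k,l)$ in $O(\log n)$ time is the entire technical content of the lemma; your proposal reduces it to ``a further $O(\log n)$-probe verification'' resting on the claim that $D_2\le\decvar$ and $D_3\le\decvar$ force $|p_kp_l|\le 2\decvar-2\sigma_k-|P[k,l]|$ ``up to a discrete-antipode correction.'' That claim does not decide the condition: $D_2$ and $D_3$ only bound the (shifted) eccentricities of $k$ and $l$ in the cycle and say nothing about pairs of interior vertices, and the ``correction'' is not a lower-order term --- it equals the length of a gap straddling an antipodal point and can be arbitrarily large. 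The condition $O(k,l)\le\decvar$ is genuinely not a function of $|C[k,l]|$ alone, so no bound on $|p_kp_l|$ can settle it.

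The paper closes exactly this gap with a structure you would need to reinvent. It shows that any violating pair must have $x\in K'$ and $y\in L'$ (vertices within path-distance $\decvar$ of $k$ but not of $l$, and vice versa); that for each such $x$ it suffices to test the single partner $x'$ (the first vertex at path-distance more than $\decvar$ from $x$); and that $\cdist_{k,l}(x,x')\le\decvar$ if and only if $|C[k,l]|\le\Delta(x)+2\decvar$, where the ``defect'' $\Delta(x)$ depends only on $x$ and $\decvar$, \emph{not} on the pair $(k,l)$. Because of this independence, all defects can be precomputed and stored in a range tree in $O(n\log n)$ total time, after which $O(k,l)\le\decvar$ reduces to the single range-minimum query $\min_{x\in K'}\Delta(x)\ge|C[k,l]|-2\decvar$, answerable in $O(\log n)$ time. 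Without some such $(k,l)$-independent precomputation, each probe of your per-$k$ search costs $\Omega(|K'|)=\Omega(n)$ rather than $O(\log n)$, and both the sequential and the parallel bounds fail.
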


To prove this lemma, observe that
\[ m(P) \le \decvar \text{ if and only if }
        \bigvee_{1 \le k < l \le n} M(k,l) \le \decvar .
\]
The algorithm checks, for each $1 \le k < n$, whether there is
some $k < l \le n$ such that $M(k,l) \le \decvar$. If one such index
$k$ is found, we know that $m(P) \le \decvar$; otherwise
$m(P) > \decvar$. Clearly this approach also produces a feasible shortcut
if it exists.

We decompose the function $M(k,l)$ into four monotone parts. This will
facilitate our search for a feasible shortcut and enable us to do
(essentially) binary search: For $1 \le k < l \le n$, we let
\begin{align*}
S(k,l) &:= \max_{k \le x \le l} \opdist_{k,l}(\pstart,x),\quad
&E(k,l) &:= \max_{k \le x \le l} \opdist_{k,l}(x,\pend),\\
U(k,l) &:= \opdist_{k,l}(\pstart,\pend),\quad
&O(k,l) &:= \max_{k \le x < y \le l} \cdist_{k,l}(x,y).
\end{align*}
Then we have $M(k,l) = \max \{S(k,l), E(k,l), U(k,l), O(k,l)\}$. The triangle inequality
implies that
\begin{align*}
S(k,l) &\le S(k,l+1),\quad
&E(k,l) &\ge E(k,l+1),\\
U(k,l) &\ge U(k,l+1),\quad
&O(k,l) &\le O(k,l+1).
\end{align*}
\begin{figure}[tb]
        \subfloat[\label{fig:path_diameter}]{
                \includegraphics[width=.46\textwidth]{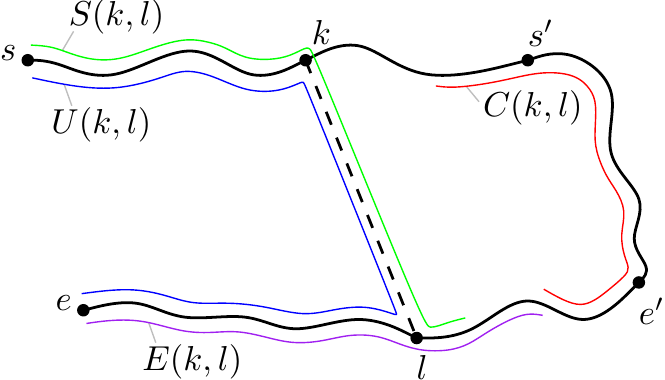}}\hspace*{5ex}
        \subfloat[\label{fig:path_circledistance}]{
                \includegraphics[width=.46\textwidth]{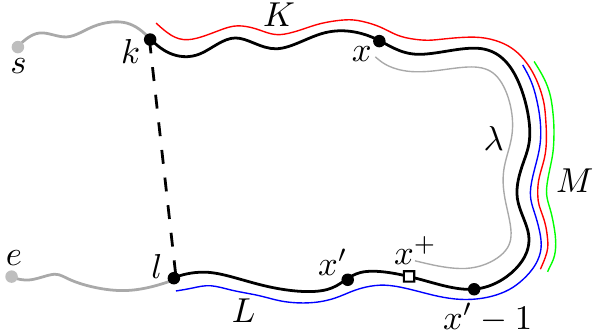}}
\caption{\textbf{(a)}~Illustration of the four distances that define the diameter of a shortcut $\overline{p_kp_l}$: $U(k,l)$ is the length of the shortest path connecting $s$ and $e$; $O(k,l)$ is the length of the longest shortest path between any two points in $C[k,l]$; $S(k,l)$ ($E(k,l)$) is the length of the longest shortest path from $s$ ($e$) to any vertex in $C(k,l)$.
\textbf{(b)}~Illustration of the computation of $O(k,l)$.}
\label{fig:path}
\end{figure}
The function $U$ is easy to evaluate once we have the array
$D[1,\dots,n]$ of the prefix-sums of the edge lengths:
$D[i] := \sum_{1 \le j < i} |p_jp_{j+1}|$.
These sums can be computed in $O(n)$ time sequentially 
or in $O(\log n)$ time using $n$ processors.
If in addition to $D$, the vertices
$s' = \max\{v\,|\,\delta_P(s,v) \leq \decvar\}$ and
$e'= \min\{v\,|\,\delta_P(v,e) \leq \decvar\}$ are computed for a fixed
$\decvar$ in $O(\log n)$ time (via binary search on $D$), the following
decision problems can be answered in constant time:
\begin{align*}
S(k,l) \le \decvar,\quad
E(k,l) \le \decvar,\quad
U(k,l) \le \decvar.
\end{align*}
We denote the maximum of these three functions by
$$N(k,l) = \max (S(k,l), E(k,l), U(k,l)).$$
Now clearly
$$M(k,l) = \max (N(k,l), O(k,l))$$
and, consequently
$$M(k,l) \le \decvar \text{ if and only if } N(k,l) \le \decvar \text{ and } O(k,l) \le \decvar.$$

\noindent
For fixed $1 \le k < n$, the algorithm will first check whether there is some $k < l \le n$ with $N(k,l) \le \decvar$. If no such $l$ exists, we can conclude
that $M(k,l) > \decvar$ for all $k < l \le n$.
The monotonicity of $S$, $E$, and $U$ implies that, for fixed $1 \le k < n$, the set
$$N_k := \{k < l \le n \mid N(k,l) \le \decvar\}$$
is an {\em interval}. This interval can be computed (using binary search in $P$ and in $D$ as described above) in $O(\log n)$ time. If $N_k = \emptyset$ we can conclude that for the $1 \le k < n$ under consideration and for all $k < l \le n$, we have that $M(k,l) > \decvar$.

If $N_k$ is non-empty, the monotonicity of $O$ implies that it is
sufficient to check for $l_k = \min N_k$ (i.e. the starting point of
the interval) whether $O(k,l_k) \le \decvar$:
$$\exists k < l \le n : O(k,l) \le \decvar\text{ if and only if } O(k,l_k) \le \decvar.$$
Note that in this case we know that $N(k,l_k) \le \decvar$.

\paragraph{Deciding the diameter of small cycles:}
We now describe how to decide for a given shortcut $1 \le k < l \le n$ if $O(k,l)\le \decvar$, given that {\em we already know that} $N(k,l)\le \decvar$. To this end, consider the following sets of vertices from $C[k,l]$:
$K :=\{k \le x \le l \mid \delta_P(k,x) \le \decvar \}$,
$L :=\{k \le x \le l \mid \delta_P(x,l) \le \decvar \}$,
$M := K \cap L$,
$K' := K \setminus L$,
$L' := L \setminus K$.

These sets are intervals and can be computed in $O(\log n)$ time by binary search.
Since $N(k,l)\le \decvar$, we can conclude the following:
\begin{itemize}
\item the set of vertices of $C[k,l]$ is $K \cup L$
\item $\cdist_{k,l}(x,y) \le \decvar$ for all $x,y \in K$
\item $\cdist_{k,l}(x,y) \le \decvar$ for all $x,y \in L$
\item $\cdist_{k,l}(x,y) \le \decvar$ for all $x \in M$, $y \in C[k,l]$
\end{itemize}
\noindent
Consequently, if $\cdist_{k,l}(x,y) > \decvar$ for $x,y \in C[k,l]$,
we can conclude that $x \in K'$ and $y \in L'$. In order to establish
that $O(k,l) \le \decvar$, it therefore suffices to verify that
$$\bigwedge_{x \in K', y \in L'} \cdist_{k,l}(x,y)\le\decvar.$$
Note that on $P$ any vertex $x$ of $K'$ is at least $\decvar$ away from the vertex $l$, i.e., $\delta_P(x,l) > \decvar $.
Let $x^+$ be point on (a vertex or an edge of) $P$ that is closer (along $P$) by a distance of $\decvar$ to $l$ than to $x$, i.e., $x^+$ is the unique point on $P$ such that
$$\delta_P(x^+,l) < \delta_P(x,l) \text{ and } \delta_P(x,x^+) = \decvar.$$
The next (in the direction of $l$) {\em vertex} of $P$ will be denoted by $x'$,
i.e., $x < x'\le l$ is the unique vertex of $P$ such that
$$\delta_P(x,x'-1) \le \decvar \text{ and } \delta_P(x,x') > \decvar.$$
Since $x$ is a vertex of $K'$, $x'$ is a vertex of $L'$.
For the following discussion we denote the distance achieved in $C[k,l]$ by using the shortcut by $\cdist^+_{k,l}$ and the distance achieved by travelling along $P$ only by $\cdist^-_{k,l}$, i.e.,
$$\cdist^-_{k,l}(x,y) := \delta_P(x,y)\text{ and }\cdist^+_{k,l}(x,y) := \delta_P(x,k)+|\overline{p_kp_l}|+\delta_P(l,y).$$
Clearly
\[\cdist_{k,l}(x,y) = \min (\cdist^+_{k,l}(x,y), \cdist^-_{k,l}(x,y)), \text{ and }
|C[k,l]|= \cdist^+_{k,l}(x,y)+\cdist^-_{k,l}(x,y).\]
For every vertex $y < x'$ on $L'$ we have that $\cdist_{k,l}(x,y) \le \cdist^-_{k,l}(x,y) \le \decvar$, so if there is some vertex $x' \neq y \in L'$ such that $\cdist_{k,l}(x,y) > \decvar$, we know that $x' < y \le l$; in that case we have that
$\cdist^+_{k,l}(x,y) \le \cdist^+_{k,l}(x,x')$.
Since we assume that $\cdist_{k,l}(x,y) > \decvar$, we also know that $\cdist^+_{k,l}(x,y) > \decvar$
and we can conclude that $\cdist^+_{k,l}(x,x') > \decvar$, and consequently that
$\cdist_{k,l}(x,x') > \decvar$, i.e., for all $x \in K'$ we have that
$$\bigwedge_{y \in L'} \cdist_{k,l}(x,y)\le \decvar   \text{ if and only if } \cdist_{k,l}(x,x') \le  \decvar.$$

The distance between (the point) $x^+$ and (the vertex) $x'$ on $P$ is called the {\em defect} of $x$ and is denoted by $\Delta(x)$, i.e., $\Delta(x) = \delta_P(x^+,x')$.

\begin{lemma}
We have
$$\cdist_{k,l}(x,x') \le \decvar \text{ if and only if }
|C[k,l]| \le \Delta(x) + 2\decvar .$$
\end{lemma}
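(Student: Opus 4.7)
The plan is to reduce the condition $\cdist_{k,l}(x,x') \le \decvar$ to a condition on $\cdist^+_{k,l}(x,x')$ alone, and then translate it into a condition on the cycle length via the identity $|C[k,l]| = \cdist^+_{k,l}(x,x') + \cdist^-_{k,l}(x,x')$.

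First I would locate $x^+$ relative to $x'-1$ and $x'$ on $P$. By the definition of $x'$, we have $\delta_P(x,x'-1) \le \decvar < \delta_P(x,x')$, while by the definition of $x^+$, we have $\delta_P(x,x^+) = \decvar$. Hence $x^+$ lies on the edge of $P$ between $x'-1$ and $x'$, strictly before $x'$ (so $\Delta(x) > 0$), and
\[
\delta_P(x,x') \;=\; \delta_P(x,x^+) + \delta_P(x^+,x') \;=\; \decvar + \Delta(x).
\]
In particular $\cdist^-_{k,l}(x,x') = \delta_P(x,x') = \decvar + \Delta(x) > \decvar$.

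Since $\cdist_{k,l}(x,x') = \min\bigl(\cdist^+_{k,l}(x,x'),\,\cdist^-_{k,l}(x,x')\bigr)$ and the second argument of the minimum is already strictly larger than $\decvar$, we conclude
\[
\cdist_{k,l}(x,x') \le \decvar \;\Longleftrightarrow\; \cdist^+_{k,l}(x,x') \le \decvar.
\]

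Finally I would invoke the cycle-length identity $|C[k,l]| = \cdist^+_{k,l}(x,x') + \cdist^-_{k,l}(x,x')$ to rewrite $\cdist^+_{k,l}(x,x') = |C[k,l]| - \decvar - \Delta(x)$. Substituting gives
\[
\cdist^+_{k,l}(x,x') \le \decvar \;\Longleftrightarrow\; |C[k,l]| \le 2\decvar + \Delta(x),
\]
which combined with the previous equivalence establishes the lemma. There is no real obstacle here: the statement is essentially an algebraic rearrangement, and the only nontrivial observation is that $x^+$ lies strictly between $x'-1$ and $x'$ on $P$, which forces the ``along-$P$'' distance $\cdist^-_{k,l}(x,x')$ to exceed $\decvar$ by exactly the defect $\Delta(x)$ and thereby to drop out of the minimum.
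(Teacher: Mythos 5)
Your proposal is correct and follows essentially the same route as the paper's proof: both establish $|C[k,l]| = \cdist^+_{k,l}(x,x') + \Delta(x) + \decvar$ (you via the identity $|C[k,l]| = \cdist^+_{k,l}(x,x') + \cdist^-_{k,l}(x,x')$ together with $\cdist^-_{k,l}(x,x') = \decvar + \Delta(x)$, the paper by writing out the cycle length directly), and both use $\cdist^-_{k,l}(x,x') > \decvar$ to reduce the minimum to $\cdist^+_{k,l}(x,x')$. No gaps.
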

\begin{proof}
Observe that
\begin{align*}
|C[k,l]| & = \delta_P(x,k)+|\overline{p_kp_l}|+\delta_P(l,x')+\delta_P(x',x^+)+\delta_P(x^+,x)\\
&= \delta_P(x,k)+|\overline{p_kp_l}|+\delta_P(l,x')+\Delta(x)+\decvar\\
& = \cdist^+_{k,l}(x,x')+\Delta(x)+\decvar.
\end{align*}
Since $\cdist^-_{k,l}(x,x') > \decvar$, we have that $\cdist_{k,l}(x,x') \le \decvar$ if and only if $\cdist^+_{k,l}(x,x') \le \decvar$; the claim follows.
\qed
\end{proof}

To summarize the above discussion, we have the following chain of equivalences (here $\Delta_{k,l} := |C[k,l]| - 2\decvar$):

\begin{equation*}
O(k,l) \le \decvar
\Leftrightarrow \bigwedge_{x \in K'} \cdist_{k,l}(x,x')\le\decvar\Leftrightarrow \bigwedge_{x \in K'}\Delta_{k,l} \le \Delta(x)\Leftrightarrow \min_{x \in K'} \Delta(x) \ge\Delta_{k,l}.
\end{equation*}
Since $K'$ is an interval, the last condition can be tested easily after some preprocessing: To this end we compute a $1d$-range tree on $D$ and associate with each vertex in the tree the minimum $\Delta$-value of the corresponding canonical subset. For every {\em vertex} $x$ of $P$ that is at least $\decvar$ away from the end vertex of $P$ we can compute $\Delta(x)$ in $O(\log n)$ time by binary search in $D$. With these values the range tree can be built in $O(n)$ time. A query for an interval $K'$ then gives us $\mu := \min_{x \in K'} \Delta(x)$ in $O(\log n)$ time and we can check  the above condition in $O(1)$ time.

We describe the algorithm in pseudocode; see Algorithm~\ref{alg:decisionalgorithm}.

\begin{algorithm}
\SetKwFunction{DecisionAlgorithm}{\textsc{DecisionAlgorithm}}
\SetKwFunction{CheckOForShortcut}{\textsc{CheckOForShortcut}}
\SetKwFunction{ComputePrefixSums}{\textsc{ComputePrefixSums}}
\SetKwFunction{ComputeSegmentTree}{\textsc{ComputeRangeTree}}
\SetKwFunction{ComputeFeasibleIntervalForN}{\textsc{ComputeFeasibleIntervalForN}}
\caption{Algorithm for deciding if $m(P) \le \decvar$}
\label{alg:decisionalgorithm}

\BlankLine
\DecisionAlgorithm{$P,\decvar$} \tcp*[r]{Decide if $m(P) \le \decvar$}
\nl\Begin{
    \textbf{global} $D \leftarrow $\ \ComputePrefixSums{$P$}\;
    \textbf{global} $s' \leftarrow \max\{v\,|\,\delta_P(s,v)\leq \decvar\}$\;
    \textbf{global} $e' \leftarrow \min\{v\,|\,\delta_P(v,e)\leq \decvar\}$\;
    \textbf{global} $T \leftarrow $\ \ComputeSegmentTree{$P,\decvar$}\;
    \For{$1 \le k <n$}
    {
        $N_k \leftarrow $\ \ComputeFeasibleIntervalForN{$k,\decvar$}\;
        \If{$N_k \neq \emptyset$ {\bf and} \CheckOForShortcut{$k,\min(N_k),\decvar$}}
        {
            \KwRet{\sc True}
        }
    }
    \KwRet{\sc False}
}
\BlankLine
\CheckOForShortcut{$k,l,\decvar$} \tcp*[r]{Decide if $O(k,l) \le \decvar$}
\nl\Begin{

    $K' \leftarrow \{k \le x \le l \mid \delta_P(k,x) \le \decvar \wedge \delta_P(x,l) > \decvar\}$\tcp*[r]{Compute the interval by binary search}
    $\mu \leftarrow \min_{x \in K'} \Delta(x)$ \tcp*[r]{Query the range tree $T$}
  \KwRet{ $(\mu \ge |C[k,l]| - 2\decvar)$}
}
\end{algorithm}
The correctness of the algorithm follows from the previous discussion.
{\sc ComputePrefixSums} runs in $O(n)$ time,
{\sc ComputeRangeTree} runs in $O(n \log n)$ time,
{\sc ComputeFeasibleIntervalForN} runs in $O(\log n)$ time, a call to
{\sc CheckOForShortcut} requires $O( \log n)$ time.
The total runtime is therefore $O( n \log n)$.
It is easy to see that with $n$ processors, the steps {\sc ComputePrefixSums} and {\sc ComputeRangeTree} can be realized in $O(\log{n})$ parallel time and that with this number of processors, all calls to {\sc CheckOForShortcut} can be handled in parallel. Therefore, the entire algorithm can be parallelized and has a parallel runtime of $O(\log n)$, as stated in Lemma~\ref{lem:seqdecalg}~b).
This concludes the proof of Lemma~\ref{lem:seqdecalg}.

When we plug this result into the 
parametric search technique
of Megiddo, we get the algorithm for the optimization problem as claimed
in Theorem~\ref{thm:main}.

From the above discussion, we note that, since there are only four
possible distances to compute to determine the diameter of a path
augmented with one shortcut edge, the following corollary follows
immediately.

\begin{corollary} \label{cor:compute diameter}
Given a path $P$ on $n$ vertices in a metric space and a shortcut
$(u,v)$, the diameter of $P \cup (u,v)$ can be computed in $O(n)$ time.
\end{corollary}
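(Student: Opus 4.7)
The plan is to use the decomposition $M(k,l) = \max\{S(k,l), E(k,l), U(k,l), O(k,l)\}$ introduced earlier in the section and to evaluate each of the four terms in $O(n)$ time. First I would compute the prefix-sum array $D[1,\ldots,n]$ of edge lengths along $P$ in $O(n)$ time, which lets any path distance $\delta_P(a,b)$ be returned in $O(1)$ time thereafter. With $D$ in hand, $U(k,l) = \delta_P(s,k) + \min(|p_kp_l|, \delta_P(k,l)) + \delta_P(l,e)$ is an $O(1)$ expression.

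For $S(k,l)$ and $E(k,l)$ I would iterate over all $n$ vertices $x$ and evaluate $\opdist_{k,l}(s,x)$ (respectively $\opdist_{k,l}(x,e)$) in $O(1)$ time: depending on whether $x$ lies outside $[k,l]$ or inside, there are at most two candidate paths (the direct path along $P$ and the path using the shortcut $\overline{p_kp_l}$), each expressible as a sum of prefix-sum differences and $|p_kp_l|$. Maintaining a running maximum yields both quantities in $O(n)$ time.

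The main obstacle is $O(k,l)$, the diameter of the weighted cycle $C[k,l]$. Set $L := |C[k,l]|$ and orient $C[k,l]$ cyclically; the shortest-path distance between two vertices $u,v$ on the cycle equals $\min(d^+(u,v), L - d^+(u,v))$, where $d^+$ denotes the oriented arc length. Hence the vertex farthest from a fixed $u$ is the one whose arc length from $u$ is closest to $L/2$. For each $u$, let $a(u)$ be the last vertex with $d^+(u, a(u)) \le L/2$; the farthest vertex from $u$ on the cycle is then either $a(u)$ or its cyclic successor. I would then sweep $u$ once around $C[k,l]$ and advance the pointer $a(u)$ only forward; this is justified because stepping $u$ forward by an edge of weight $w$ decreases $d^+(u,\cdot)$ uniformly by $w$, so $a(\cdot)$ is monotone. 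A single rotating-pointer pass therefore computes $O(k,l)$ in $O(m) = O(n)$ total time, where $m = l - k + 1$. Combining the four contributions yields $M(k,l)$ in $O(n)$ time, as claimed.
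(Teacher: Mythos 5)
Your proposal is correct, and it rests on the same four\-/way decomposition $M(k,l)=\max\{S(k,l),E(k,l),U(k,l),O(k,l)\}$ that the paper uses; the paper's own justification of the corollary is a single sentence asserting that it ``follows immediately'' from that decomposition. Where you go beyond the paper is in actually supplying the algorithm: the paper only ever handles $O(k,l)$ in decision form (testing $O(k,l)\le\decvar$ via the defect values $\Delta(x)$ and a range tree), and never spells out how to compute the exact value of the cycle diameter in linear time. Your rotating\-/pointer sweep --- exploiting that $\cdist_{k,l}(u,v)=\min\bigl(d^+(u,v),\,L-d^+(u,v)\bigr)$ is unimodal in the oriented arc length, so the ``antipodal'' index $a(u)$ advances monotonically as $u$ sweeps the cycle --- is the right classical tool and gives $O(k,l)$ exactly in $O(l-k+1)$ time after the prefix sums are available. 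The handling of $S$, $E$, $U$ by brute\-/force iteration with $O(1)$ distance evaluations is also fine (iterating over all $n$ vertices rather than only those on the cycle is harmless, since the extra values are dominated by $S$, $E$, or $U$ anyway). In short: same skeleton as the paper, but your write\-/up fills in the one genuinely non\-/obvious step (linear\-/time diameter of a weighted cycle) that the paper leaves implicit.
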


\section{An Approximation Algorithm in Euclidean Space}
\label{sec:diam-path-apx}

In Section~\ref{sec:diam-path-exact}, we presented an
$O(n \log^3 n)$-time algorithm for the problem when the input graph is
a path in a metric space. Here we show a simple $(1+\eps)$-approximation
algorithm with running time $O(n +1/\eps^3)$ for the case when the
input graph is a path in $\Reals{^d}$, where $d$ is a constant.
The algorithm will use two ideas: clustering and the well-separated
pair decomposition (WSPD) as introduced by
Callahan and Kosaraju~\cite{ck-dmpsa-95}.

\begin{definition} [\cite{ck-dmpsa-95}]   \label{wellsep}
       Let $s>0$ be a real number, and let $A$ and $B$ be two finite
       sets of points in $\Reals^d$. We say that $A$ and $B$ are
       \emph{well-separated} with respect to $s$, if there are two
       disjoint $d$-dimensional balls $C_A$ and $C_B$, having the
       same radius, such that
       (i) $C_A$ contains $A$,
       (i) $C_B$ contains $B$, and
       (ii) the minimum distance between $C_A$ and $C_B$ is at least
       $s$ times the radius of $C_A$.
\end{definition}

The parameter $s$ will be referred to as the {\em separation constant}.
The next lemma follows easily from Definition~\ref{wellsep}.

\begin{lemma} [\cite{ck-dmpsa-95}]  \label{insamepair}
       Let $A$ and $B$ be two finite sets of points that are
       well-separated w.r.t.\ $s$, let $x$ and $p$ be points of $A$,
       and let $y$ and $q$ be points of $B$. Then
       (i) $|xy| \leq (1+4/s) \cdot |pq|$, and
       (ii) $|px| \leq (2/s) \cdot |pq|$.
\end{lemma}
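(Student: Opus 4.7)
The plan is to reduce both inequalities to the two geometric facts supplied by Definition~\ref{wellsep}: points inside the same enclosing ball are at bounded Euclidean distance, and points in the two enclosing balls are far apart. Let $C_A$ and $C_B$ be the two disjoint balls guaranteed by the definition, of common radius $r$, with $A \subseteq C_A$, $B \subseteq C_B$, and the minimum distance between $C_A$ and $C_B$ at least $sr$.

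I would handle part (ii) first, since the key inequality it uses also appears in part (i). Because $p$ and $x$ both lie in $C_A$, the diameter bound gives $|px| \le 2r$. Because $p \in C_A$ and $q \in C_B$ with inter-ball gap at least $sr$, we have $|pq| \ge sr$, i.e.\ $r \le |pq|/s$. Combining these yields $|px| \le 2r \le (2/s)\,|pq|$, which is exactly (ii).

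For (i), the idea is to compare $xy$ with $pq$ via a triangle-inequality detour through the pair $(p,q)$. Writing
\[
|xy| \;\le\; |xp| + |pq| + |qy|,
\]
the first and third terms are each bounded by $2r$ (applying the diameter bound in $C_A$ and $C_B$, respectively), while $r \le |pq|/s$ by the same argument as in (ii). Therefore
\[
|xy| \;\le\; |pq| + 4r \;\le\; |pq| + (4/s)\,|pq| \;=\; (1 + 4/s)\,|pq|,
\]
which is (i).

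I do not expect any real difficulty: the lemma is essentially a two-line consequence of Definition~\ref{wellsep} plus the triangle inequality. The only point requiring mild care is that Definition~\ref{wellsep} specifies that $C_A$ and $C_B$ have the \emph{same} radius $r$, so both diameter-bound applications use the same $r$; otherwise the constants $2$ and $4$ in the statement would have to be adjusted. Beyond that, the proof is completely mechanical.
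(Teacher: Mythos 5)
Your proof is correct and is exactly the standard argument the paper has in mind when it says the lemma ``follows easily from Definition~\ref{wellsep}'' (the paper itself only cites Callahan and Kosaraju and gives no proof): bound $|px|$ and $|qy|$ by the common diameter $2r$, bound $r$ by $|pq|/s$ using the separation, and apply the triangle inequality for part (i). Nothing is missing.
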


\begin{definition}[\cite{ck-dmpsa-95}]  \label{defWSPD}
       Let $S$ be a set of $n$ points in $\Reals^d$, and let $s>0$
       be a real number. A {\em well-separated pair decomposition}
       (WSPD) for $S$ with respect to $s$ is a sequence of pairs
       of non-empty subsets of~$S$,
       $(A_1,B_1) ,  \ldots , (A_m, B_m)$,
       such that
       \begin{enumerate}
       \item $A_i \cap B_i = \emptyset$, for all $i=1, \ldots, m$,
       \item for any two distinct points $p$ and $q$ of
             $S$, there is exactly one pair $(A_i,B_i)$
             in the sequence, such that
             (i) $p \in A_i$ and $q \in B_i$,
             or (ii) $q \in A_i$ and $p \in B_i$,
       \item $A_i$ and $B_i$ are well-separated w.r.t.\ $s$,
             for $1\leq i \leq m$.
       \end{enumerate}
       The integer $m$ is called the {\em size} of the WSPD.
\end{definition}

Callahan and Kosaraju showed that a WSPD of size $m = \O(s^dn)$ can

be computed in $\O(s^dn+n \log n)$ time.
\subsubsection{Algorithm}
We are given a polygonal path $P$ on $n$ vertices in $\Reals^{d}$.
We assume without loss of generality that the total length of $P$ is
$1$. Partition $P$ into $m=1/\eps_1$ subpaths $P_1, \ldots , P_m$,
each of length $\eps_1$, for some constant $0< \eps_1<1$ to
be defined later. Note that a subpath may have one (or both) endpoint
in the interior of an edge.
For each subpath $P_i$, $1\leq i\leq m$, select an
arbitrary vertex $r_i$ along $P_i$ as a representative vertex, if it
exists. The set of representative vertices is denoted $R_{P}$; note that
the size of this set is at most $m=1/\eps_1$.
Let $P(R)$ be the path consisting of the vertices of $R_P$, in the order
in which they appear along the path $P$. We give each edge $(u,v)$
of $P(R)$ a weight equal to $\delta_{P}(u,v)$. 
the interior of an edge of $P$, then $\delta_{P}(u,v)$ is defined in
the natural way.)

Imagine that we ``straighten'' the path $P(R)$, so that it is
contained on a line. In this way, the vertices of this path form a
point set in $\Reals^{1}$; we compute a well-separated pair
decomposition $\W$ for the one-dimensional set $R_{P}$, with separation
constant $1/\eps_2$, with $0< \eps_2<1/4$ to be defined later.
Then, we go through all pairs $\{A,B\}$ in $\W$ and compute the
diameter of $P(R) \cup \{(rep(A),rep(B)\}$, where $rep(A)$ and $rep(B)$
are representative points of $A$ and $B$, respectively, which are
arbitrarily chosen from their sets. Note that
the number of pairs in $\W$ is $O(1/\eps_1\eps_2)$.
Finally the algorithm outputs the best shortcut.

\subsubsection{Analysis}
We first discuss the running time and then turn our attention to
the approximation factor of the algorithm.

The clustering takes $O(n)$ time, and constructing the WSPD of $R_{P}$
takes
$O(\frac {1}{\eps_1\eps_2}+\frac {1}{\eps_1} \log \frac {1}{\eps_1})$
time. For each of the $O(1/\eps_1\eps_2)$ well-separated pairs in $\W$,
computing the diameter takes, by Corollary~\ref{cor:compute diameter},
time linear in the size of the uni-cyclic graph, that is,
$O(\frac {1}{\eps_1^2\eps_2})$ time in total.

\begin{lemma} \label{cor:apx-running time}
  The running time of the algorithm is $O(n + \frac {1}{\eps_1^2\eps_2})$.
\end{lemma}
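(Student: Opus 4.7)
The plan is to bound the running time phase by phase, with the four phases being: (i) partitioning $P$ into $m = 1/\eps_1$ subpaths and choosing the set $R_P$ of representatives, (ii) building the auxiliary path $P(R)$ with its edge weights, (iii) constructing a one-dimensional WSPD $\W$ on $R_P$ with separation $s = 1/\eps_2$, and (iv) evaluating the candidate shortcut for each well-separated pair. I will show that phases (i)--(ii) contribute $O(n)$ and phases (iii)--(iv) contribute $O(1/(\eps_1^2\eps_2))$, so summing yields the claim.

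For phase (i) a single left-to-right scan over $P$ that maintains a running total of edge lengths and opens a new subpath whenever this total reaches $\eps_1$ produces the partition and the representatives in $O(n)$ time, with $|R_P| \le 1/\eps_1$. For phase (ii) I would precompute prefix sums of edge lengths along $P$ in $O(n)$ time; the weight of each edge of $P(R)$ is then a difference of two such prefix sums, so $P(R)$ together with its weights is constructed in $O(1/\eps_1)$ additional time. For phase (iii) I would apply the Callahan--Kosaraju construction to the one-dimensional set $R_P$ (ordered by the prefix-sum coordinate along $P$), obtaining a WSPD of size $O(s^d |R_P|) = O(1/(\eps_1 \eps_2))$ in time $O(s^d|R_P| + |R_P|\log|R_P|) = O(1/(\eps_1\eps_2) + (1/\eps_1)\log(1/\eps_1))$.

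For phase (iv) I would iterate over the $O(1/(\eps_1\eps_2))$ pairs $\{A,B\}$ of $\W$ and, for each, compute the diameter of $P(R)\cup\{(rep(A), rep(B))\}$ by invoking Corollary~\ref{cor:compute diameter}; since $P(R)$ has $O(1/\eps_1)$ vertices, each evaluation costs $O(1/\eps_1)$ and the phase totals $O(1/(\eps_1^2 \eps_2))$. Summing the four bounds and using $\eps_2 < 1/4$ together with $\log(1/\eps_1) = O(1/\eps_1)$ to absorb the $O(1/(\eps_1\eps_2))$ and $O((1/\eps_1)\log(1/\eps_1))$ terms into $O(1/(\eps_1^2\eps_2))$ gives the claimed bound. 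There is no substantive obstacle; the only point that requires care is that both the WSPD construction and each diameter computation must be charged to $|R_P|$ rather than to $n$, which is precisely why the algorithm coarsens $P$ to $P(R)$ before invoking those subroutines.
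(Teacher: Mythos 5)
Your proof is correct and follows essentially the same route as the paper's: $O(n)$ for the clustering and for building $P(R)$, $O(\frac{1}{\eps_1\eps_2}+\frac{1}{\eps_1}\log\frac{1}{\eps_1})$ for the one-dimensional WSPD, and $O(1/\eps_1)$ per well-separated pair via Corollary~\ref{cor:compute diameter} over the $O(\frac{1}{\eps_1\eps_2})$ pairs. The only difference is that you spell out the implementation of the clustering and the absorption of the lower-order terms, which the paper leaves implicit.
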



Before we consider the approximation bound, we need to define some
notation. Consider any vertex $p$ in $P$. Let $r(p)$ denote the
representative vertex of the subpath of $P$ containing $p$. For any
two vertices $p$ and $q$ in $P$, let $\{A,B\}$ be the well-separated
pair such that $r(p) \in A$ and $r(q)\in B$. The representative
points of $A$ and $B$ will be denoted $w(p)$ and $w(q)$, respectively.

\begin{lemma}
For any shortcut $e=(p,q)$ and for any two vertices $x,y \in P$,
we have
   $$(1-4\eps_2) \cdot \delta_{G}(x,y) - 6\eps_1 \leq \delta_{H}(w(x),w(y)) \leq (\frac {1}{1-4\eps_2}) \cdot \delta_{G}(x,y) + 6\eps_1,$$
   where $G=P \cup \{(p,q)\}$ and $H=P(R)\cup \{(w(p),w(q))\}$.
\end{lemma}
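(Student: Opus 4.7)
The plan is to prove both bounds by constructing an explicit witness path (for the upper bound, a path in $H$; for the lower bound, a path in $G$) and carefully tracking the three sources of error: the clustering error $|r(v)-v|_{1D}\le \eps_1$, the WSPD-representative error $|w(v)-r(v)|_{1D}$ bounded via Lemma (insamepair)(ii), and the difference between Euclidean and path distances $|\cdot|_{Euc} \le \delta_P(\cdot,\cdot)$. Throughout, I identify the 1D-straightened distance between two vertices of $R_P$ with their $\delta_P$-distance.

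For the upper bound I would split on whether the shortest $x$-to-$y$ path in $G$ uses the shortcut. \emph{Case A:} $\delta_G(x,y)=\delta_P(x,y)$. Choose the all-$P(R)$ path from $w(x)$ to $w(y)$; its length is $|w(x)-w(y)|_{1D}$. Lemma (insamepair)(i), applied to the WSPD pair that contains $r(x),r(y)$, gives $|w(x)-w(y)|_{1D}\le (1+4\eps_2)|r(x)-r(y)|_{1D}$, and the clustering bound yields $|r(x)-r(y)|_{1D}\le \delta_P(x,y)+2\eps_1$. Using $1+4\eps_2 \le 1/(1-4\eps_2)$ and $\eps_2<1/4$, the bound $(1/(1-4\eps_2))\delta_G(x,y)+6\eps_1$ follows. \emph{Case B:} WLOG $\delta_G(x,y)=\delta_P(x,p)+|pq|_{Euc}+\delta_P(q,y)$. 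I would use the path $w(x)\!\to\! w(p)\!\to\! w(q)\!\to\! w(y)$ in $H$ (along $P(R)$, across the shortcut, along $P(R)$). The three pieces are bounded by inserting $r(\cdot)$ and the original vertex as intermediates: $|w(v)-r(v)|_{1D}\le 2\eps_2\,\delta_P(v,v')$ (via (ii), with $v'$ the partner in the corresponding WSPD pair) and $|r(v)-v|_{1D}\le \eps_1$. For the Euclidean term, $|w(p)w(q)|_{Euc}\le |w(p)p|_{Euc}+|pq|_{Euc}+|qw(q)|_{Euc}$, and each correction is bounded by its 1D analogue since Euclidean $\le$ path. Summing, the $\delta_P$-pieces reassemble into $\delta_G(x,y)$, the clustering contributions total $6\eps_1$, and the WSPD contributions, using the normalization $|P|=1$, combine with the multiplicative slack $1/(1-4\eps_2)$ to yield the target.

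The lower bound is symmetric. Take any realizer of $\delta_H(w(x),w(y))$ and split on whether it uses the shortcut $(w(p),w(q))$. If not, it is a $P(R)$-path of length $|w(x)-w(y)|_{1D}$; Lemma (insamepair)(i) inverted gives $|w(x)-w(y)|_{1D}\ge (1-4\eps_2)|r(x)-r(y)|_{1D}\ge (1-4\eps_2)(\delta_P(x,y)-2\eps_1)\ge (1-4\eps_2)\delta_G(x,y)-2\eps_1$. If it uses the shortcut, a reverse triangle inequality lower-bounds each $P(R)$-piece by the corresponding $\delta_P$-segment minus the clustering/WSPD corrections, and lower-bounds $|w(p)w(q)|_{Euc}$ by $|pq|_{Euc}$ minus Euclidean corrections (again bounded by the 1D ones). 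The summands assemble into $\delta_P(x,p)+|pq|_{Euc}+\delta_P(q,y)\ge \delta_G(x,y)$, minus the accumulated error.

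The main obstacle is the bookkeeping: the WSPD is built on the 1D straightening (so Lemma (insamepair) gives guarantees on $\delta_P$, not on Euclidean distances), whereas the shortcut weights in both $G$ and $H$ are Euclidean. The discrepancy is absorbed by the Euclidean $\le$ path inequality (always in our favor when bounding a Euclidean correction term from above) together with the normalization $|P|=1$ (which caps the additive error from each WSPD-correction of the form $2\eps_2\,\delta_P(\cdot,\cdot)$ by a small constant multiple of $\eps_2$, thereby fitting inside the $6\eps_1$ slack under the algorithm's calibration $\eps_1=\Theta(\eps_2)$).
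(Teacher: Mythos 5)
Your proposal follows essentially the same route as the paper: the same case split on whether the shortest $x$--$y$ path in $G$ uses the shortcut, the same witness path $w(x)\to w(p)\to w(q)\to w(y)$ in $H$, and the same error accounting combining the clustering radius $\eps_1$ with Lemma~\ref{insamepair} applied to the one-dimensional WSPD (the paper, too, proves only the upper bound and declares the lower bound symmetric). The one substantive divergence is your treatment of the shortcut-edge term, and it costs you the stated constant. The paper bounds $|w(p)w(q)|$ \emph{multiplicatively}: $|pq|\geq |r(p)r(q)|-2\eps_1$ and $|w(p)w(q)|\leq(1+4\eps_2)|r(p)r(q)|$ via Lemma~\ref{insamepair}(i) on the pair containing $r(p),r(q)$, so the shortcut term contributes only $(1+4\eps_2)\cdot 2\eps_1=O(\eps_1)$ additively. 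You instead detour additively through $p$ and $q$, bounding each correction $|w(p)p|$ by $2\eps_2\,\delta_P(w(p),w(q))+\eps_1\leq 2\eps_2+\eps_1$ using the normalization $|P|=1$; this leaks an extra additive $4\eps_2$, and with the paper's own calibration ($\eps_1=\eps/60$, $\eps_2=\eps/32$) that is $\eps/8$, which does not fit inside the remaining slack of the stated $6\eps_1=\eps/10$. So as written you prove $\delta_H(w(x),w(y))\leq \frac{1}{1-4\eps_2}\delta_G(x,y)+6\eps_1+O(\eps_2)$, a weaker statement than the lemma (harmless for the final theorem, but not what is claimed); adopting the paper's multiplicative bound for the shortcut term repairs this. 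On the credit side, you correctly isolate the subtlety that the WSPD guarantees live in the 1D straightening (i.e., on $\delta_P$-distances) while the shortcut weights are Euclidean --- a point the paper's Observation~(1) glosses over --- and your resolution via ``Euclidean $\leq$ path'' is the right one.
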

\begin{proof}
We only prove the second inequality, because the proof of the first
inequality is almost identical.

Consider two arbitrary vertices $x,y$ in $P$, and consider a shortest
path in $G$ between $x$ and $y$. We have two cases:\\
\noindent {\bf Case 1:} If $\delta_G(x,y)=\delta_{P}(x,y)$,
then $\delta_H(r(x),r(y)) \leq \delta_{P}(x,y) +2\eps_1$. \\
\noindent {\bf Case 2:} If $\delta_G(x,y) < \delta_{P}(x,y)$,
then the shortest path in $G$ between $x$ and $y$ must traverse
$(p,q)$. Assume that the path is
$x\rightsquigarrow p \rightarrow q \rightsquigarrow t$, thus
$\delta_G(x,y)=\delta_{P}(x,p)+|pq|+\delta_{P}(q,y)$. Consider the
following three observations:\\

  \noindent (1) $|pq| \geq |r(p)r(q)|-2\eps_1$ and $|w(p)w(q)|\leq (1+4\eps_2) \cdot |r(p)r(q)|$. Consequently, $|w(p)w(q)|\leq (1+4\eps_2) \cdot (|pq| + 2\eps_1)$. \\

  \noindent (2) We have
         \begin{eqnarray*}
          \delta_{P}(x,p) & \geq & \delta_{P}(w(x),w(p))-\delta_{P}(w(x),x) - \delta_{P}(w(p),p) \\
            & \geq & \delta_{P}(w(x),w(p)) - (\eps_1 + \delta_{P}(w(x),r(x))) - (\eps_1 + \delta_{P}(w(y),r(y)))\\
            & \geq & \delta_{P}(w(x),w(p)) - (\eps_1 + 2\eps_2 \delta_{P}(w(x),w(p))) - (\eps_1 + 2\eps_2 \delta_{P}(w(x),w(p)))\\
            &  =   & (1 - 4\eps_2) \cdot \delta_{P}(w(x),w(p)) - 2\eps_1\\
            &  \geq   & (1 - 4\eps_2) \cdot \delta_{H}(w(x),w(p)) - 2\eps_1 .
        \end{eqnarray*}
        That is, $\delta_{H}(w(x),w(p)) \leq \frac {1}{1-4\eps_2} \cdot \delta_{P}(x,p) + 2\eps_1$. \\

  \noindent (3) We have, $\delta_{H}(w(y),w(q)) \leq \frac {1}{1-4\eps_2} \cdot \delta_{P}(y,q) + 2\eps_1$, following the same arguments as in~(2).\\

  \noindent Putting together the three observations we get:
  \begin{eqnarray*}
     \delta_{H}(w(x),w(y)) & \leq & \delta_{H}(w(x),w(p))+|w(p)w(q)|+\delta_{H}(w(q),w(y))\\
        & \leq & (\frac {1}{1-4\eps_2}) \cdot \delta_{P}(x,p) + 2\eps_1 +
                       (1+4\eps_2) \cdot (|pq| + 2\eps_1) \\
        &&+ (\frac {1}{1-4\eps_2}) \cdot \delta_{P}(y,q) + 2\eps_1 \\
        & < & (\frac {1}{1-4\eps_2}) \cdot \delta_{G}(x,y) + 6\eps_1,
  \end{eqnarray*}
where the last inequality follows from the fact that $0<\eps_2<1/4$.
This concludes the proof of the lemma.
\qed
\end{proof}

By setting $\eps_1=\eps/60$ and $\eps_2=\eps/32$ and using the fact that the diameter of $H$ is at least $1/2$, we obtain the following theorem that summarizes this section.
\begin{theorem}
  Given a path $P$ with $n$ vertices in $\Reals^{d}$ and a real number
  $\eps>0$, we can compute a shortcut to $P$ in $O(n +1/\eps^3)$ time
  such that the resulting uni-cyclic graph has diameter at most
  $(1+\eps)\cdot d_\mathrm{opt}$, where $d_\mathrm{opt}$ is the diameter of an
  optimal solution.
\end{theorem}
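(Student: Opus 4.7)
The plan is to run the algorithm described above and separately bound its running time and approximation ratio. The running time is immediate from Lemma~\ref{cor:apx-running time}: substituting $\eps_1 = \eps/60$ and $\eps_2 = \eps/32$ into the bound $O(n + 1/(\eps_1^2 \eps_2))$ gives $O(n + 1/\eps^3)$.

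For the approximation guarantee I first verify that $d_\mathrm{opt} \geq 1/2$ (recall we normalized $|P|=1$). Let $(p^*,q^*)$ be an optimal shortcut and write $a = \delta_P(\pstart,p^*)$, $b = \delta_P(q^*,\pend)$, $c = \delta_P(p^*,q^*)$, $d = |p^*q^*|$, so that $a+b+c = 1$ and $d \leq c$. Considering the vertex of the cycle essentially antipodal to $p^*$ (at cycle-distance $(c+d)/2$ from $p^*$), one gets $\delta_{G^*}(\pstart,v) \geq a + (c+d)/2$ and $\delta_{G^*}(v,\pend) \geq b + (c+d)/2$, whose larger value is at least the average $(a+b)/2 + (c+d)/2 = 1/2 + d/2 \geq 1/2$. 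Rounding $v$ to a nearest vertex loses at most the longest edge length, which is negligible once $n$ is large enough (smaller instances can be solved exactly via Theorem~\ref{thm:main} within the claimed time).

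The approximation then follows by applying the distance-approximation lemma in both directions. Let $\{A^*,B^*\}$ be the unique WSPD pair containing $r(p^*)$ and $r(q^*)$, and let $H^* = P(R) \cup \{(w(p^*),w(q^*))\}$. The upper bound of the lemma, applied to every pair of vertices, gives $\mathrm{diam}(H^*) \leq \frac{1}{1-4\eps_2}\, d_\mathrm{opt} + 6\eps_1$. Since the algorithm enumerates every WSPD pair and outputs the one whose associated compressed graph $\tilde H$ minimizes the diameter, we obtain $\mathrm{diam}(\tilde H) \leq \mathrm{diam}(H^*)$. The output shortcut $(\tilde w,\tilde z)$ has both endpoints in $R_P$, so $w(\tilde w)=\tilde w$ and $w(\tilde z)=\tilde z$; consequently, the lower bound of the lemma transfers the diameter of $\tilde H$ back to the full augmented path $\tilde G = P \cup \{(\tilde w,\tilde z)\}$ via $\mathrm{diam}(\tilde G) \leq \frac{1}{1-4\eps_2}\bigl(\mathrm{diam}(\tilde H) + 6\eps_1\bigr)$. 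Chaining the two bounds yields
$$\mathrm{diam}(\tilde G) \;\leq\; \frac{1}{(1-4\eps_2)^2}\, d_\mathrm{opt} + \frac{12\eps_1}{1-4\eps_2}.$$
With $d_\mathrm{opt}\geq 1/2$ the additive term $\Theta(\eps_1)$ is bounded by $\Theta(\eps_1)\cdot d_\mathrm{opt}$, and plugging in $\eps_1=\eps/60$, $\eps_2=\eps/32$ one routinely checks that the right-hand side is at most $(1+\eps)\,d_\mathrm{opt}$.

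The main subtlety will be applying the distance-approximation lemma in both directions so that the multiplicative factor $1/(1-4\eps_2)$ compounds only twice and the clustering error $6\eps_1$ adds only twice, and then balancing the two independent error sources—the WSPD separation $\eps_2$ and the clustering granularity $\eps_1$—so that the additive-to-multiplicative conversion enabled by $d_\mathrm{opt}\geq 1/2$ leaves room for the promised factor $(1+\eps)$ with the stated constants $1/60$ and $1/32$.
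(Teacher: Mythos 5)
Your write-up follows exactly the route the paper intends: the paper's own ``proof'' of this theorem is the one-line remark that one sets $\eps_1=\eps/60$, $\eps_2=\eps/32$ and uses a lower bound of $1/2$ on the relevant diameter, and you have correctly expanded this into the two-sided use of the distance lemma (forward, to get $\mathrm{diam}(H^*)\le\frac{1}{1-4\eps_2}d_\mathrm{opt}+6\eps_1$; the enumeration argument $\mathrm{diam}(\tilde H)\le\mathrm{diam}(H^*)$; backward, to return to $P$), including the correct and non-obvious observation that $w(\tilde w)=\tilde w$ because the unique WSPD pair separating $\tilde w$ and $\tilde z$ is the pair from which they were chosen as representatives. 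The running-time part matches Lemma~\ref{cor:apx-running time} verbatim.

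The genuine gap is in your justification of $d_\mathrm{opt}\ge 1/2$. A local slip first: for the point $v$ antipodal to $p^*$ on the cycle one has $\delta_{G^*}(v,\pend)=b+(c-d)/2$, not $b+(c+d)/2$; the two distances then sum to $a+b+c=1$, so the averaging conclusion still holds \emph{for that continuous point}. The real problem is the discretization step: ``loses at most the longest edge length, which is negligible once $n$ is large enough'' is false, since the longest edge of $P$ is not controlled by $n$ --- a path on arbitrarily many vertices can have one edge of length close to $1$. In fact the discrete bound $d_\mathrm{opt}\ge 1/2$ itself fails: take the six vertices $(0,\tfrac16)$, $(0,0)$, $(\tfrac16,0)$, $(-\tfrac16,0)$, $(0,-\delta)$, $(0,-\tfrac16-\delta)$ in this path order; the path has length $1+O(\delta^2)$, and the shortcut joining $(0,0)$ to $(0,-\delta)$ has length $\delta$ and yields an augmented graph of diameter $\tfrac13+O(\delta)$. (The paper's parallel assertion that $\mathrm{diam}(H)\ge\tfrac12$ suffers from the same defect.) This costs only constants: one can prove a universal lower bound of roughly $\tfrac13$ for any unit-length augmented path (via $\delta_G(s,v)+\delta_G(v,e)\ge\delta_G(s,e)$ plus a case analysis on whether some vertex lands in the middle arc of the cycle), and that weaker bound still converts the additive $\Theta(\eps_1)$ term into a multiplicative one, so the theorem survives with at most a re-tuning of $\eps_1,\eps_2$. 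You should replace the ``large $n$'' argument by such a bound rather than rely on the edge lengths shrinking.
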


\section{Augmenting a Tree with One Edge}
\label{sec:diam-tree-exact}

Next we consider the case when the input graph is a tree $T=(V,E)$,
where $V$ is a set of $n$ vertices in a metric space. The aim is to
compute an edge $f$ in $(V \times V ) \setminus E$ such that the diameter
of the resulting unicyclic graph $(V, E \cup {f})$ is minimized.

Let $P_T$ be the common intersection of all longest paths in $T$. Observe
that $P_T$ is a non-empty path in $T$. We denote the endvertices of
$P_T$ by $a$ and $b$. Let $F=T\setminus E(P_T)$ be the forest that
results from deleting the edges of $P_T$ from $T$. For any vertex $u$
of $T$,
\begin{enumerate}
\item let $\sigma(u)$ be the vertex on $P_T$ that is in the same tree of
      $F$ as $u$, and
\item let $\tau(u)$ be the tree of $F$ that contains $u$.
\end{enumerate}
Refer to Figure~\ref{fig:T} for an illustration.

\begin{figure}[t]
\centering
\includegraphics[width=8cm]{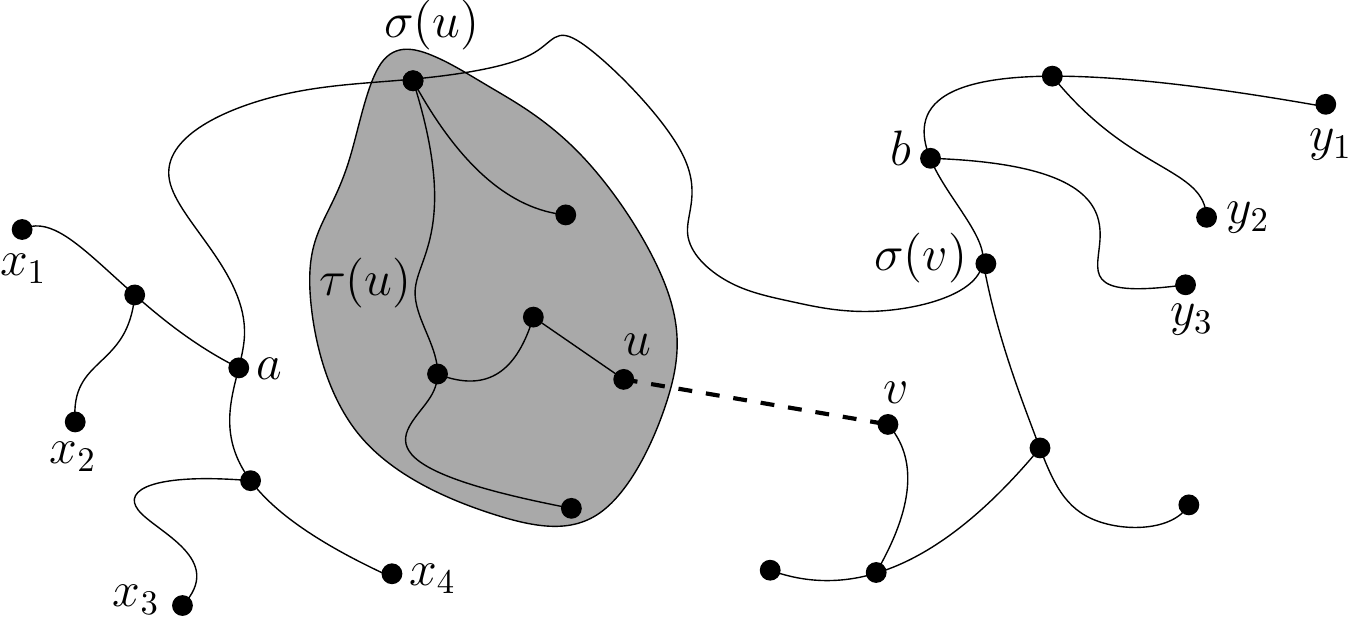}
\caption{Illustrating the input tree $T$ with $(u,v)$ as an optimal
        shortcut. The paths in $T$ between $x_i$ and $y_j$, for
        $1 \leq k \leq 4$ and $1 \leq j \leq 3$, represent all longest
        paths in $T$. These paths intersect in the path between $a$ and
        $b$.}
\label{fig:T}
\end{figure}

Consider any augmenting edge $(u,v)$. In the following lemma, we will
prove that the augmenting edge $(\sigma(u),\sigma(v))$ is at least as
good as $(u,v)$. That is, the diameter of
$T \cup \{(\sigma(u),\sigma(v))\}$ is at most the diameter of
$T \cup \{(u,v)\}$. In case $\sigma(u)=\sigma(v)$,
$T \cup \{(\sigma(u),\sigma(v))\}$ is equal to $T$, and the diameter
of $T \cup \{(u,v)\}$ is equal to the diameter of $T$.

\begin{lemma} \label{lem:vertices_on_path}
There exists an optimal augmenting edge $f$ for $T$ such that both
vertices of $f$ are vertices of $P_T$.
\end{lemma}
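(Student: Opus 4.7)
The plan is to prove the stronger structural claim stated in the paragraph preceding the lemma: for every augmenting edge $(u,v)$, the diameter of $T''=T\cup\{(\sigma(u),\sigma(v))\}$ is at most the diameter of $T'=T\cup\{(u,v)\}$. From this the lemma is immediate: start with any optimal augmenting edge $(u,v)$; if $(\sigma(u),\sigma(v))$ is a valid non-loop non-edge of $T$, it is an optimal edge with both endpoints on $P_T$. Otherwise we are in a degenerate case ($\sigma(u)=\sigma(v)$, or $(\sigma(u),\sigma(v))$ is already an edge of $T$), both of which give $T''=T$, so the inequality forces $\mathrm{diam}(T')=\mathrm{diam}(T)$, meaning no shortcut strictly improves the diameter and any non-edge on $P_T$ is optimal.

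The sole metric input I plan to use is the triangle inequality
$$|\sigma(u)\sigma(v)|\;\le\;|uv|+|u\,\sigma(u)|+|v\,\sigma(v)|\;\le\;|uv|+\delta_T(u,\sigma(u))+\delta_T(v,\sigma(v)),$$
where the second step uses that every metric distance is bounded above by the corresponding tree distance. Let $(s,t)$ realise $\mathrm{diam}(T'')$, and split cases on whether $s\in\tau(u)$ and whether $t\in\tau(v)$.

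When $s\notin\tau(u)$ and $t\notin\tau(v)$, the tree paths from $s$ to $u$ and from $v$ to $t$ meet $P_T$ at $\sigma(u)$ and $\sigma(v)$, so $\delta_T(s,u)=\delta_T(s,\sigma(u))+\delta_T(\sigma(u),u)$ and $\delta_T(v,t)=\delta_T(v,\sigma(v))+\delta_T(\sigma(v),t)$. Combining this with the metric triangle inequality above (and its symmetric partner traversing the shortcut in the other direction), every path in $T''$ from $s$ to $t$—tree-only or shortcut-using—is bounded by a path in $T'$ of the same type. Taking the minimum yields $\delta_{T''}(s,t)\le\delta_{T'}(s,t)\le\mathrm{diam}(T')$, as required.

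The delicate case is $s\in\tau(u)$ (symmetric if $t\in\tau(v)$, and the case in which both hold is analogous). Here the simple pair-by-pair inequality can fail because $\delta_T(s,\sigma(u))+\delta_T(\sigma(u),u)$ strictly exceeds $\delta_T(s,u)$. The plan is a \emph{replacement} step: for $x$ ranging over $\tau(\sigma(u))$, $\delta_{T''}(x,t)$ has the form $\delta_T(x,\sigma(u))+\min(C_1,C_2)$ with $C_1,C_2$ depending only on $t$ and on the shortcut $(\sigma(u),\sigma(v))$, so the maximum over $x\in\tau(\sigma(u))$ is attained at a deepest vertex of $\tau(\sigma(u))$. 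The key structural ingredient comes from the definition of $P_T$: because every longest path of $T$ passes through both endvertices $a$ and $b$, the height $h(w):=\max_{x\in\tau(w)}\delta_T(x,w)$ satisfies, for every $w\in P_T$,
$$h(w)\;\le\;\delta_T(w,a)+h(a)\qquad\text{and}\qquad h(w)\;\le\;\delta_T(w,b)+h(b),$$
since otherwise a path longer than $D$ through $\tau(w)$ would exist. Applied at $w=\sigma(u)$, this gives $\delta_T(s,\sigma(u))\le h(\sigma(u))\le\delta_T(\sigma(u),a)+h(a)=\delta_T(s^*,\sigma(u))$, where $s^*$ is a vertex of $\tau(a)$ realising $h(a)$; hence $\delta_{T''}(s,t)\le\delta_{T''}(s^*,t)$. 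When $\sigma(u)\ne a$, $s^*\in\tau(a)$ lies outside $\tau(u)\cup\tau(v)$ and the main case applied to $(s^*,t)$ finishes the argument; the boundary subcase $\sigma(u)=a$ is handled by the symmetric replacement towards $b$ using the second $P_T$-structural inequality. The main obstacle throughout is precisely this bookkeeping in the delicate case, where the pair-by-pair comparison between $T'$ and $T''$ breaks down and one must invoke the $P_T$-structural bounds on $h$ at the right moment to transfer the diameter witness out of the subtrees $\tau(u)$ and $\tau(v)$.
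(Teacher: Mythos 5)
Your overall plan (compare $T''=T\cup\{(\sigma(u),\sigma(v))\}$ pair by pair with $T'=T\cup\{(u,v)\}$ where possible, and transfer to a longest-path witness where the pair-by-pair comparison breaks) is the same strategy as the paper's, and your easy case and the height bounds $h(w)\le\delta_T(w,a)+h(a)$, $h(w)\le\delta_T(w,b)+h(b)$ are correct. But the delicate case has a genuine gap: the replacement inequality $\delta_{T''}(s,t)\le\delta_{T''}(s^*,t)$ does not follow from $\delta_T(s,\sigma(u))\le\delta_T(s^*,\sigma(u))$. The identity $\delta_{T''}(x,t)=\delta_T(x,\sigma(u))+\delta_{T''}(\sigma(u),t)$ on which your monotonicity rests holds only for $x$ inside $\tau(\sigma(u))$, whereas $s^*$ lies in $\tau(a)$, whose paths to $t$ need not pass through $\sigma(u)$ at all. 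Concretely, let $P_T=(a,m,b)$ with $\delta_T(a,m)=\delta_T(m,b)=5$, attach two pendant leaves at distance $10$ to each of $a$ and $b$ (so the longest paths have length $30$ and intersect exactly in $(a,m,b)$), attach a pendant leaf $s$ at distance $14$ to $m$, and take $\sigma(u)=m$ and $t=a$: every $T''$-path from $s$ to $a$ leaves $\tau(m)$ through $m$, so $\delta_{T''}(s,a)\ge 14$, while $\delta_{T''}(s^*,a)\le\delta_T(s^*,a)=10$. The transfer that actually works, and that the paper uses, replaces the \emph{other} endpoint by a longest-path endvertex $x$ beyond $a$ (showing $\delta_{T''}(s,t)\le\delta_{T''}(s,x)$) and then compares with $\delta_{T'}(s,x)\le\mathrm{diam}(T')$.

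The deeper omission is that even after transferring to a witness pair you must know that the optimal shortcut $(u,v)$ is \emph{not used} on the relevant shortest paths in $T'$, so that $\delta_{T'}$ of the witness pair equals its tree distance and is therefore an upper bound for $\delta_{T''}(s,t)$. This is indispensable for pairs $s,t$ lying in the \emph{same} tree $\tau(u)$ of $F$ --- a subcase of your ``$s\in\tau(u)$'' case that your formula $\delta_{T''}(x,t)=\delta_T(x,\sigma(u))+\min(C_1,C_2)$ does not even cover, since such paths need not pass through $\sigma(u)$ --- because there $\delta_{T''}(s,t)=\delta_T(s,t)$ while $(u,v)$ could a priori make $\delta_{T'}(s,t)$ strictly smaller. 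The paper rules this out via the inequality $\delta_T(\sigma(u),u)+|uv|+\delta_T(v,\sigma(v))<\delta_T(\sigma(u),\sigma(v))$, obtained by first reducing to the case $\mathrm{diam}(T')<\mathrm{diam}(T)$ and then choosing longest-path endvertices $x,y$ outside $\tau(u)$ and $\tau(v)$ (which itself requires a short argument about the branches leaving $a$) whose shortest path in $T'$ must traverse $(u,v)$. Nothing in your proposal produces this inequality, and without it the subtree-internal pairs, and indeed the pair $(u,v)$ itself, are not bounded. A smaller slip: your ``main case'' $s\notin\tau(u)$, $t\notin\tau(v)$ literally contains the configuration $s\in\tau(v)$, $t\in\tau(u)$ (e.g.\ $s=v$, $t=u$), for which the pair-by-pair bound $\delta_{T''}(s,t)\le\delta_{T'}(s,t)$ is false; it must be relabelled into the deferred case.
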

\begin{proof}
Consider an optimal augmenting edge $(u,v)$. We may assume without loss
of generality that $\sigma(u)$ is on the subpath of $P_T$ between $a$
and $\sigma(v)$. See Figure~\ref{fig:T}.

Let $T_{opt}=T\cup\{(u,v)\}$, let $D_{opt}$ be the diameter of $T_{opt}$,
and let $T'=T\cup\{(\sigma(u),\sigma(v))\}$. In order to prove the
lemma, it suffices to show that the diameter of $T'$ is at most
$D_{opt}$. If $D_{opt}$ is equal to the diameter of $T$, then this
obviously holds, because the diameter of $T'$ is at most the diameter
of $T$. Thus, from now on, we assume that $D_{opt}$ is less than
the diameter of $T$.

We claim that there exist endvertices $x$ and $y$ of some longest path
in $T$ such that
\begin{enumerate}
\item $a$ is on the path in $T$ between $x$ and $\sigma(u)$,
\item $b$ is on the path in $T$ between $y$ and $\sigma(v)$,
\item $x$ is not a vertex of $\tau(u) \setminus \{\sigma(u)\}$,
\item $y$ is not a vertex of $\tau(v) \setminus \{\sigma(v)\}$.
\end{enumerate}
To prove this, consider the leaves $x_1,x_2,\ldots,x_k$ and
$y_1,y_2,\ldots,y_{\ell}$ of $T$ such that
\begin{enumerate}
\item for each $i$ with $1 \leq i \leq k$, $a$ is on the path in $T$
      between $x_i$ and $b$,
\item for each $j$ with $1 \leq j \leq \ell$, $b$ is on the path in $T$
      between $y_j$ and $a$,
\item for each $i$ and $j$ with $1 \leq i \leq k$ and
      $1 \leq j \leq \ell$, the path in $T$ between $x_i$ and $y_j$ is a
      longest path in $T$, and each longest path in $T$ is between some
      $x_i$ and some $y_j$.
\end{enumerate}
Refer to Figure~\ref{fig:T}.
If $k=1$, then $x_1 = a$ and we take $x=x_1$. Assume that $k \geq 2$.
Consider the maximal subtree of $T$ that contains $a$ and all leaves
$x_1,x_2,\ldots,x_k$, and imagine this subtree to be rooted at $a$.
There is a child $a'$ of $a$ such that $u$ is not in the subtree
rooted at $a'$. We take $x$ to be any $x_i$ that is in the subtree
rooted at $a'$. By a symmetric argument, we can prove the existence
of the vertex $y$.

\vspace{0.5em}

Recall that we assume that the diameter of $T_{opt}$ (i.e., $D_{opt}$)
is less than the diameter of $T$. This implies that the shortest path in
$T_{opt}$ from $x$ to $y$ contains the shortcut $(u,v)$ and,
therefore,
\begin{equation} \label{eqHello}
   \delta_T(\sigma(u),u) + |uv| + \delta_T(v,\sigma(v)) <
   \delta_T(\sigma(u),\sigma(v)) .
\end{equation}
In particular, $\sigma(u) \neq \sigma(v)$.

Now let $s$ and $t$ be any pair of vertices. In the rest of the proof,
we will show that $\delta_{T'}(s,t)\leq D_{opt}$. Up to symmetry, there
are three main cases to consider with respect to the positions of $s$
and $t$:

\begin{enumerate}
 \item \textit{Both vertices are in trees of $F$ that contain the
shortcut vertices:} $s,t \in \tau(u) \cup \tau(v)$, see
Fig~\ref{fig:Case1}.
  \begin{enumerate}
    \item \textit{The vertices are in different trees of $F$:}
         $s\in \tau(u)$ and $t \in \tau(v)$.\newline
      Since
\[ \delta_{T'}(s,\sigma(u)) = \delta_T(s,\sigma(u))
      \leq \delta_T(x,\sigma(u)) = \delta_{T'}(x,\sigma(u))
\]
and
\[ \delta_{T'}(\sigma(v),t) = \delta_T(\sigma(v),t)
      \leq \delta_T(\sigma(v),y) = \delta_{T'}(\sigma(v),y),
\]
we have
\begin{eqnarray*}
 \delta_{T'}(s,t) & = & \delta_{T'}(s,\sigma(u)) +
                        | \sigma(u) \sigma(v) | +
                        \delta_{T'}(\sigma(v),t) \\
     & \leq & \delta_{T'}(x,\sigma(u)) + | \sigma(u) \sigma(v) | +
              \delta_{T'}(\sigma(v),y) \\
     & = & \delta_{T'}(x,y) \\
     & \leq & \delta_{T_{opt}}(x,y) \\
     & \leq & D_{opt} .
\end{eqnarray*}

    \item \textit{The vertices are in the same tree of $F$:}
           $s,t \in \tau(u)$.\newline
We will prove that in this case, the shortest paths between $s$ and
$t$ in both $T'$ and $T_{opt}$ do not contain the shortcut, i.e., both
these shortest paths are equal to the path in $\tau(u)$ (and, thus, in
$T$) between $s$ and $t$. This will imply that
\[ \delta_{T'}(s,t) = \delta_{T_{opt}}(s,t) \leq D_{opt} .
\]

Consider the shortest path $P'(s,t)$ between $s$ and $t$ in $T'$.
Observe that shortest paths do not contain repeated vertices.
If $P'(s,t)$ contains the shortcut $(\sigma(u),\sigma(v))$, then this
path visits the vertex $\sigma(u)$ twice. Thus, $P'(s,t)$ does not
contain $(\sigma(u),\sigma(v))$.

Consider the shortest path $P_{opt}(s,t)$ from $s$ to $t$ in
$T_{opt}$, and assume that this path contains $(u,v)$. We may assume
without loss of generality that, starting at $s$, this path traverses
$(u,v)$ from $u$ to $v$. (Otherwise, we interchange $s$ and $t$.)
Since $P_{opt}(s,t)$ does not contain repeated vertices, this path
contains the subpath in $T$ from $\sigma(v)$ to $\sigma(u)$. This
subpath must be the shortest path in $T_{opt}$ between $\sigma(v)$ and
$\sigma(u)$. However, as we have seen in (\ref{eqHello}), this is not
the case. Thus, we conclude that $P_{opt}(s,t)$ does not contain
$(u,v)$.
  \end{enumerate}

\begin{figure}[t]
\centering
\includegraphics[width=\textwidth]{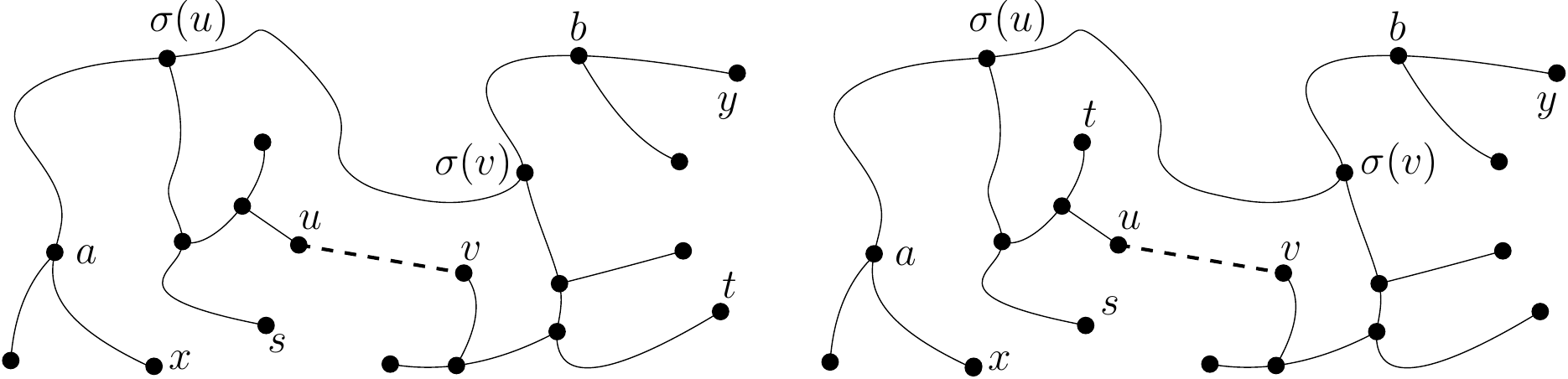}
\caption{Illustrating (left) case 1(a) and (right) case 1(b).}
\label{fig:Case1}
\end{figure}

  \item \textit{Neither vertices are in trees of $F$ that contain the
         shortcut vertices:} $s,t \notin \tau(u) \cup \tau(v)$.
    \newline
If the shortest path in $T_{opt}$ from $s$ to $t$ does not contain
$(u,v)$, then
\[ \delta_{T'}(s,t) \leq \delta_T(s,t) = \delta_{T_{opt}}(s,t)
            \leq D_{opt} .
\]
Assume that this shortest path contains $(u,v)$. We may assume without
loss of generality that this shortest path traverses the edge $(u,v)$
from $u$ to $v$. We have
\begin{eqnarray*}
 \delta_{T'}(s,t) & \leq &
   \delta_T(s,\sigma(u)) + | \sigma(u) \sigma(v) | +
      \delta_T(\sigma(v),t)  \\
 & \leq & \delta_T(s,\sigma(u)) +
          \delta_T(\sigma(u),u) + |uv| + \delta_T(v,\sigma(v))
          + \delta_T(\sigma(v),t)  \\
 & = & \delta_{T_{opt}}(s,t) \\
 & \leq & D_{opt} .
\end{eqnarray*}

  \item \textit{One vertex is in a tree of $F$ that contains a shortcut
    vertex, the other is not:} $s\in \tau(u)$ and
    $t\notin \tau(u) \cup \tau(v)$.
   \begin{enumerate}
    \item \textit{$t$ is a vertex in the maximal subtree of $T$ having
        $x$ and $\sigma(u)$ as leaves}, see Fig.~\ref{fig:Case3}(left).
      \newline
    As in Case 1(b), it can be shown that the shortest paths between
    $s$ and $t$ (as well as the shortest paths between $s$ and $x$)
    in both $T_{opt}$ and in $T'$ do not contain the shortcut. Thus,
    \[ \delta_{T'}(s,t) \leq \delta_{T'}(s,x) = \delta_{T}(s,x) =
          \delta_{T_{opt}}(s,x) \leq D_{opt}.
    \]
    \item \textit{$t$ is a vertex in the maximal subtree of $T$ having
          $\sigma(u)$ and $\sigma(v)$ as leaves}, see
          Fig.~\ref{fig:Case3}(right).
           \newline
We first observe that
\begin{eqnarray*}
   \delta_{T'}(s,t) & = & \delta_T(s,\sigma(u)) +
                          \delta_{T'}(\sigma(u),t) \\
    & \leq & \delta_T(x,\sigma(u)) + \delta_{T'}(\sigma(u),t) \\
    & = & \delta_{T'}(x,t) .
\end{eqnarray*}
If the shortest path in $T_{opt}$ from $x$ to $t$ does not contain
$(u,v)$, then
\[ \delta_{T'}(x,t) \leq \delta_T(x,t) = \delta_{T_{opt}}(x,t)
         \leq D_{opt} .
\]
Assume that the shortest path in $T_{opt}$ from $x$ to $t$ contains
$(u,v)$. Then
\[ \delta_{T_{opt}}(x,t) = \delta_T(x,\sigma(u)) +
                           \delta_T(\sigma(u),u) + |uv| +
                           \delta_T(v,\sigma(v)) +
                           \delta_T(\sigma(v),t) .
\]
Observe that
\[ \delta_{T'}(x,t) \leq \delta_T(x,\sigma(u)) +
                            | \sigma(u) \sigma(v) | +
                            \delta_T(\sigma(v),t) .
\]
The triangle inequality implies that
\[ \delta_{T'}(x,t) \leq \delta_{T_{opt}}(x,t) \leq D_{opt} .
\]
    \item $t$ is a vertex in the maximal subtree of $T$ having
          $\sigma(v)$ and $y$ as leaves.
          \newline
          In this case, we have
    \begin{eqnarray*}
        \delta_{T'}(s,t)
          & = & \delta_T(s,\sigma(u)) + | \sigma(u) \sigma(v) | +
                \delta_T(\sigma(v),t) \\
          & \leq & \delta_T(x,\sigma(u)) + | \sigma(u) \sigma(v) | +
                   \delta_T(\sigma(v),y) \\
          & \leq & \delta_T(x,\sigma(u)) +
                   \delta_{T_{opt}}(\sigma(u),\sigma(v)) +
                   \delta_T(\sigma(v),y) \\
          & = & \delta_{T_{opt}}(x,y) \\
          & \leq & D_{opt} .
    \end{eqnarray*}
   \end{enumerate}
\end{enumerate}
This concludes the proof of the lemma.
\hfill $\square$ \end{proof}

   \begin{figure}[t]
\centering
\includegraphics[width=\textwidth]{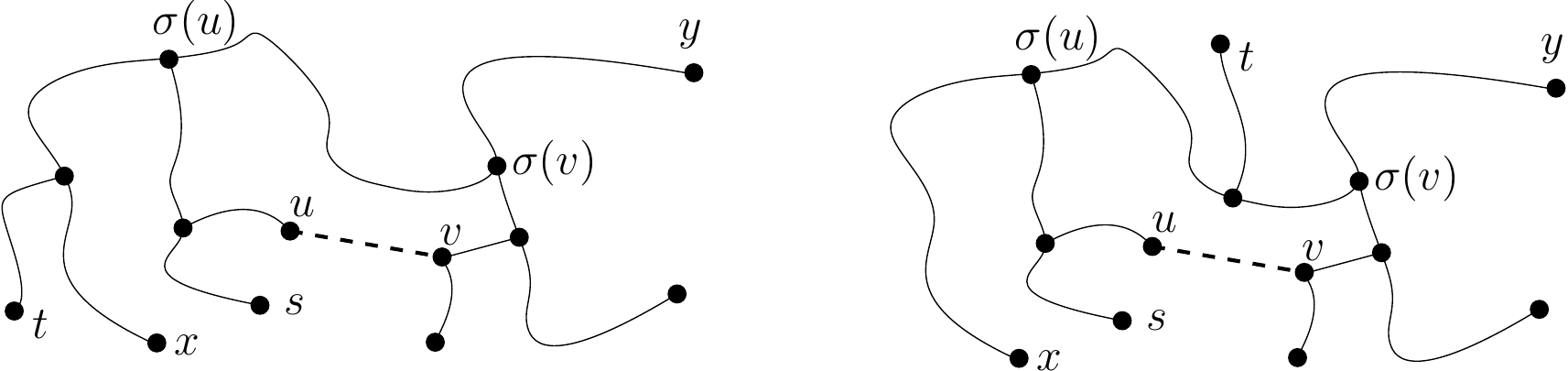}
\caption{Illustrating (left) case 3(a) and (right) case 3(b).}
\label{fig:Case3}
\end{figure}

As a consequence of Lemma~\ref{lem:vertices_on_path}, the diameter of
a tree cannot be improved by adding a single shortcut, if the
intersection of all longest paths is a vertex or a single edge.

\subsection{Augmenting a tree}
For a tree $T$ with $n$ vertices, let the intersection of all longest
paths in $T$ be the path $P_T$. In a preprocessing step, we convert
$T$ to a caterpillar tree $T_{cp}$ by replacing every tree $T'$ of
$T\setminus E(P_T)$ by a single edge of length $\delta_{T}(t,v)$,
where $v$ is the common vertex of $T'$ and $P_T$, and $t$ is the
furthest vertex in $T'$ to $v$, see Figure~\ref{fig:Caterpillar}. Note
that $T_{cp}$ has a unique longest path.

   \begin{figure}[ht]
\centering
\includegraphics[width=.7\textwidth]{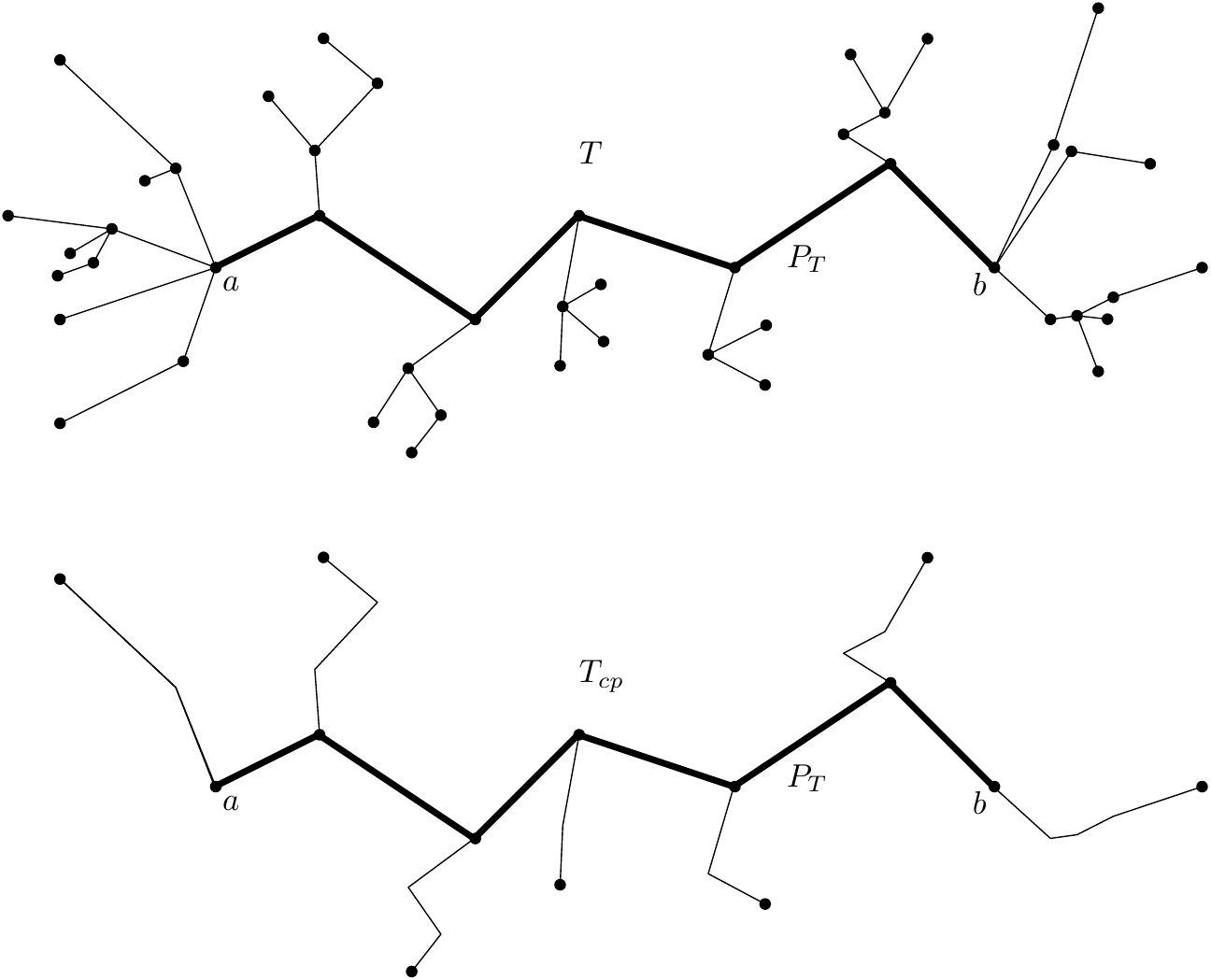}
\caption{Illustrating the conversion of the tree $T$ to the caterpillar
$T_{cp}$, where subtrees \textit{dangling} from $P_T$ (the path from
$a$ to $b$) are compressed to a single edge.}
\label{fig:Caterpillar}
\end{figure}

Recall that for a path, there are only four relevant distances to
compute to determine the diameter; the same holds for a tree with a
unique longest path. These distances can trivially be computed in $O(n)$
time. Now consider the case when one of the endpoints of the shortcut
is fixed at a vertex $v$ and the second endpoint is moving along
$P_{T}$ in $T_{cp}$. As for the path case, the four
functions describing the distances are monotonically increasing or
decreasing, hence, a simple binary search along $P_{T}$ for the second
endpoint can be used to determine the optimal placement of the shortcut.
As a result, the optimal shortcut, given one fixed endpoint $v$ of the
shortcut, can be computed in $O(n \log n)$ time. We get:

\begin{theorem}
Given a tree $T$ on $n$ vertices in a metric space, we can compute a
shortcut that minimizes the diameter of the augmented graph in
$O(n^2 \log n)$ time.
\end{theorem}

Recall that Lemma~\ref{lem:vertices_on_path} states that there exists
an optimal shortcut with both its endpoints on $P_{T}$. However, our
algorithm only requires that one of the endpoints is on $P_{T}$. The
obvious question is if one can modify the algorithm so that it takes
full advantages of the lemma.

\section*{Acknowledgments}
Part of this work was done at the \emph{17th Korean Workshop on
Computational Geometry}, held on Hiddensee Island in Germany,
June 22--27, 2014. We thank the other workshop participants for their
helpful comments.
We also thank Carsten Grimm for his comments on the proof of
Lemma~\ref{lem:vertices_on_path}.

\bibliographystyle{plain}
\bibliography{PathAndTree}

\end{document}